\RequirePackage{amsmath}
\documentclass[runningheads]{llncs}

\usepackage{graphicx}
\usepackage{tikz}
\usepackage{amssymb}
\usepackage{ebproof}
\usepackage{proof}
\usepackage{newunicodechar}
\usepackage{xspace}
\usepackage{multirow}
\usepackage[hidelinks]{hyperref}
\usepackage{enumerate}
\usepackage{array}

\usepackage{color}

\usepackage{etoolbox}
\AtEndEnvironment{proof}{\phantom{}\hfill$\blacksquare$} 

\newcommand{\goodbox}{\hspace{-.6ex}\text{
    \tikz[baseline=-.6ex, rounded corners=.01ex, line width=.1ex]
    {\draw (-.6ex,-.6ex) rectangle (.6ex,.6ex);}}\kern.2ex}
\renewcommand{\Box}{\goodbox}

\newcommand{\xBox}{\Box^{-1}} 

\newcommand{\gooddiamond}{\hspace{-.6ex}\text{
    \tikz[baseline=-.6ex, rounded corners=.01ex, rotate=45, line width=.1ex]
    {\draw (-.5ex,-.5ex) rectangle (.5ex,.5ex);}}\kern.2ex}
\newcommand{\Diam}{\gooddiamond}

\newcommand{\xDiam}{\Diam^{-1}}

\renewcommand{\phi}{\varphi}

\newcommand{\lang}{\mathcal{L}}
\newcommand{\sys}[1]{\mathsf{#1}}
\newcommand{\sysrule}[1]{(\text{#1})}
\newcommand{\Prop}{{\sf{Prop}}}
\newcommand{\openboxes}{\xBox}
\newcommand{\Ra}{\Rightarrow}
\newcommand{\To}{\Rightarrow}

\usepackage{fontawesome5}

\newcommand{\K}{\sys{K}}
\newcommand{\iK}{\sys{iK}}
\newcommand{\IK}{\sys{IK}}
\newcommand{\IL}{\sys{IL}}
\newcommand{\CK}{\sys{CK}}
\newcommand{\WK}{\sys{WK}}

\newcommand{\CKH}{\sys{CKH}}
\newcommand{\WKH}{\sys{WKH}}
\newcommand{\Kbax}{\hyperlink{ax:Kb}{\text{K}_{\Box}}}
\newcommand{\Kdax}{\hyperlink{ax:Kd}{\text{K}_{\Diam}}}
\newcommand{\Nax}{\hyperlink{ax:N}{\text{N}}}
\newcommand{\Axr}{\hyperlink{rule:Ax}{\sysrule{Ax}}}
\newcommand{\El}{\hyperlink{rule:El}{\sysrule{El}}}
\newcommand{\MP}{\hyperlink{rule:MP}{\sysrule{MP}}}
\newcommand{\Nec}{\hyperlink{rule:Nec}{\sysrule{Nec}}}

\newcommand{\Gfourip}{\sys{G4ip}}
\newcommand{\GfourCK}{\sys{G4CK}}
\newcommand{\GfourWK}{\sys{G4WK}}
\newcommand{\IdP}{\hyperlink{rule:IdP}{\sysrule{IdP}}}
\newcommand{\Id}{\hyperlink{rule:Id}{\sysrule{Id}}}
\newcommand{\botL}{\hyperlink{rule:botL}{\sysrule{$\bot$L}}}
\newcommand{\impR}{\hyperlink{rule:impR}{\sysrule{$\rightarrow$R}}}
\newcommand{\impL}{\hyperlink{rule:impL}{\sysrule{$\rightarrow$L}}}
\newcommand{\imppL}{\hyperlink{rule:imppL}{\sysrule{$p\rightarrow$L}}}
\newcommand{\impandL}{\hyperlink{rule:impandL}{\sysrule{$\land\!\rightarrow$L}}}
\newcommand{\imporL}{\hyperlink{rule:imporL}{\sysrule{$\lor\!\rightarrow$L}}}
\newcommand{\impimpL}{\hyperlink{rule:impimpL}{\sysrule{$\rightarrow\rightarrow$L}}}
\newcommand{\impboxL}{\hyperlink{rule:impboxL}{\sysrule{$\Box\!\rightarrow$L}}}
\newcommand{\impdiamL}{\hyperlink{rule:impdiamL}{\sysrule{$\Diam\!\rightarrow$L}}}
\newcommand{\andL}{\hyperlink{rule:andL}{\sysrule{$\land$L}}}
\newcommand{\andR}{\hyperlink{rule:andR}{\sysrule{$\land$R}}}
\newcommand{\orRone}{\hyperlink{rule:orR}{\sysrule{$\lor$R$_1$}}}
\newcommand{\orRtwo}{\hyperlink{rule:orR}{\sysrule{$\lor$R$_2$}}}
\newcommand{\orL}{\hyperlink{rule:orL}{\sysrule{$\lor$L}}}
\newcommand{\BoxR}{\hyperlink{rule:BoxR}{\sysrule{$\Box$R}}}
\newcommand{\DiamL}{\hyperlink{rule:DiamL}{\sysrule{$\Diam$L}}}
\newcommand{\DiamLW}{\hyperlink{rule:DiamLW}{\sysrule{$\Diam$L'}}}
\newcommand{\wk}{\hyperlink{rule:wk}{\sysrule{wk}}}
\newcommand{\cntr}{\hyperlink{rule:cntr}{\sysrule{cntr}}}
\newcommand{\cut}{\hyperlink{rule:cut}{\sysrule{cut}}}

\newcommand{\symApCK}{\mathsf{A}_p^{\CK}}
\newcommand{\symEpCK}{\mathsf{E}_p^{\CK}}
\newcommand{\symApWK}{\mathsf{A}_p^{\WK}}
\newcommand{\symEpWK}{\mathsf{E}_p^{\WK}}
\newcommand{\Ap}[1]{\mathsf{A}_{p}(#1)}
\newcommand{\Ep}[1]{\mathsf{E}_{p}(#1)}
\newcommand{\callAp}[1]{\mathsf{\mathcal{A}_p}(#1)}
\newcommand{\callEp}[1]{\mathsf{\mathcal{E}_p}(#1)}
\newcommand{\pvf}[1]{\text{Vars}\,(#1)}
\newcommand{\pvs}[1]{\text{Vars}\,(#1)}

\def\BaseUrl{https://ianshil.github.io/CK}
\newcommand{\rocqdoc}[1]{\href{\BaseUrl/#1}{\raisebox{-0.6mm}{\includegraphics[height=0.8em]{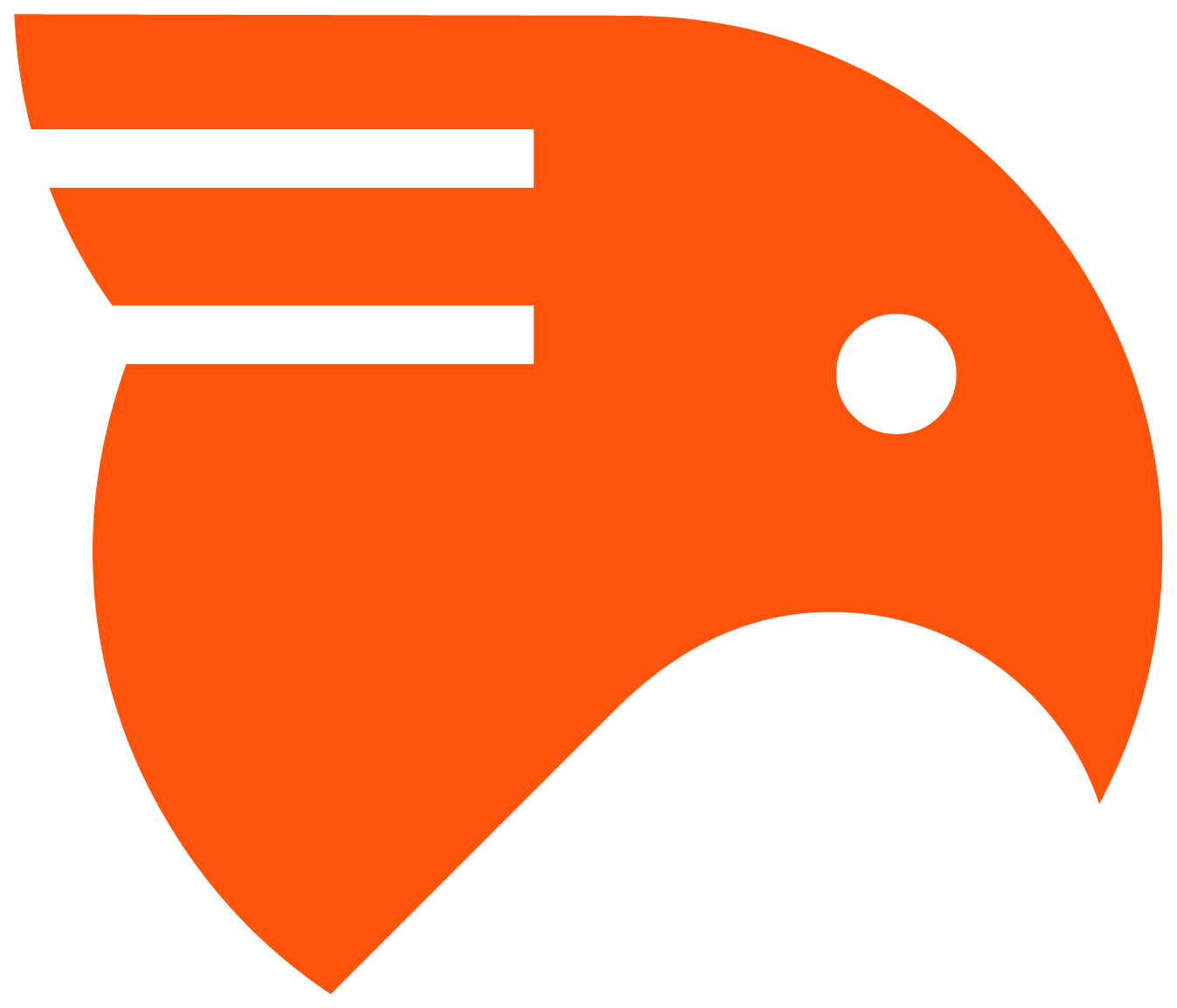}}}}

\title{Uniform interpolation with constructive diamond}
\author{Iris van der Giessen\inst{1}\orcidID{0009-0008-4908-2496}
\and 
Ian Shillito\inst{2,3}\orcidID{0009-0009-1529-2679}}
\authorrunning{I. van der Giessen and I. Shillito}

\institute{University of Amsterdam, Amsterdam, The Netherlands\\
\email{i.vandergiessen@uva.nl} \and
University of Birmingham, Birmingham, United Kingdom 
\and
University of Melbourne, Melbourne, Australia \\
\email{ian.shillito@unimelb.edu.au}}

\begin{document}

\maketitle

\begin{abstract}
Uniform interpolation is a strong form of interpolation providing an interpretation of propositional quantifiers within a propositional logic. Pitts’ seminal work establishes this property for intuitionistic propositional logic relying on a sequent calculus in which naïve backward proof-search terminates. This constructive approach has been adapted to a wide range of logics, including intuitionistic modal logics. Surprisingly, no intuitionistic modal logic with independent box and diamond has yet been shown to satisfy uniform interpolation. We fill in this gap by proving the uniform interpolation property for Constructive K (CK) and Wijesekera's K (WK).  We build on Pitts' technique by exploiting existing terminating calculi for CK and WK, which we prove to eliminate cut, and formalise all our results in the proof assistant Rocq. Together, our results constitute the first positive uniform interpolation results for intuitionistic modal logics with diamond.
\keywords{Uniform Interpolation \and Proof Theory \and Intuitionistic Modal Logic \and Rocq.}
\end{abstract}

\setcounter{footnote}{0}

\section{Introduction}

Intuitionistic modal logics form a rich class of logics that formalise modal reasoning over an intuitionistic propositional base. They pop up in different forms and applications and are sensitive to many design choices. 
When considering intuitionistic modal logics with both $\Box$ and $\Diam$, a common feature is that $\Box$ and $\Diam$ are not interdefinable (this in contrast to classical modal logics). One traditionally identifies two main streams to define such logics: \textit{Intuitionistic modal logic} and \textit{Constructive modal logic}. Intuitionistic modal logics were early investigated by, e.g., Fischer Servi \cite{FischerServi77,FischerServi84}, Plotkin and Stirling \cite{Plotkin_Stirling86}, and Ewald \cite{Ewald1986} and are motivated by their algebraic connection to classical bi-modal logics \cite{FischerServi77,WolterZakharyaschev97,WolterZakharyaschev99a,WolterZakharyaschev99b} and their interpretation via the standard translation into first-order intuitionistic logic \cite{Simpson94PhD}. The base system in this setting is Intuitionistic $\K$, here denoted~$\IK$, as a counterpart to classical modal logic $\K$. Constructive modal logics are mostly motivated by computational applications. Constructive $\K$, here denoted $\CK$, and its modal extensions are motivated by their Curry-Howard interpretation in type theory and their categorical semantics \cite{BiePai00,AleMenPaiRit01,BeldePRit01,Kav16}. A closely related constructive modal logic is Wijesekera's system introduced to model reasoning in dynamic systems \cite{Wijesekera90}. We denote its propositional part by $\WK$.\footnote{Wijesekera's full system could be viewed as a more general intuitionistic modal logic than the Constructive Concurrent Dynamic Logic $\sf{CCDL}$ from \cite{WijesekeraNerode05}. Some authors also denote $\WK$ as $\sf{CCDL}$ (e.g., in \cite{DalGreOli21}), although, originally, $\sf{CCDL}$ was introduced over a richer syntax including program constructors.} Recently, new approaches have been proposed to define intuitionistic versions of classical~$\K$ \cite{Balbiani_et_al24a,Balbiani_et_all24b}. In this paper we concentrate on the well-studied constructive modal logics $\CK$ and $\WK$.

The research on these two logics is extensive and revealed some surprises in recent years. Axiomatically, the two logics form extensions of intuitionistic propositional logic $\IL$ with the following modal rule and (subset) of modal axioms below. These axioms follow classically from the usual normality axiom $\Kbax$, but are independent in an intuitionistic setting. We define $\CK = \IL + \{\Kbax,\Kdax\}$ and $\WK = \CK + \{\Nax\}$, and note that $\CK \subset \WK$.
\begin{center}
$
\infer[\scriptstyle\Nec]{\Gamma\vdash\Box\varphi}{\emptyset\vdash\varphi}
\qquad \qquad
\begin{array}{ll}
    \Kbax & {\Box(\varphi\rightarrow\psi)\rightarrow(\Box\varphi\rightarrow\Box\psi)}\\
   \Kdax & {\Box(\varphi\rightarrow\psi)\rightarrow(\Diam\varphi\rightarrow\Diam\psi)}\\
    \Nax & {\neg\Diam\bot}
\end{array}
$
\end{center}
For semantic characterisations of the logics one can consult~\cite{GroShiClo25}.
Recent studies~\cite{DasMar22-blog,DasMar23,GroShiClo25,DasGroShi25-blog} surprisingly resolved a common misbelief that the $\Diam$-free fragments of Constructive and Intuitionistic modal logics coincide: those for $\CK$ and~$\WK$ coincide with the $\Box$-only logic $\iK$, but the one for $\IK$ is strictly stronger. 

We are interested in the interpolation properties for $\CK$ and $\WK$. Interpolation properties form a central theme in logic due to their good mathematical properties and widespread applications in computer science (see \cite{BookCraigInterpolation} for a recent overview on foundations, methods and applications of interpolation). \textit{Craig interpolation} states that if $\phi$ entails $\psi$, there is a formula $\theta$ in the common vocabulary of $\phi$ and~$\psi$ such that $\phi$ entails $\theta$ and $\theta$ entails $\psi$. The formula $\theta$ is called the \textit{interpolant} and explains the reason why $\phi$ entails $\psi$. \textit{Uniform interpolation} is stronger, 
as the interpolant only depends on $\phi$ and uniformly works for any $\psi$ in a specified vocabulary. This property makes it possible to provide an interpretation of propositional quantifiers within the propositional base \cite{Pit92}.

Proof-theoretic methods are ideal for establishing metalogical results such as interpolation.
Indeed, the Craig interpolation property has been established for $\WK$ in Wijesekera's original paper using a sequent calculus \cite{Wijesekera90}. In~\cite{DalGreOli21}, terminating sequent calculi for $\CK$ and $\WK$ were designed, but the uniform interpolation question was left open. Even though other systems are designed for $\CK$ and~$\WK$ including a sequent system, a natural deduction system \cite{BeldePRit01}, a nested sequent calculus \cite{ArisakaDasStrassburger15} and a focused 2-sequent calculus \cite{Mendler11} for $\CK$, and a tableaux calculus for the richer version $\sf{CCDL}$ of $\WK$   \cite{WijesekeraNerode05}, neither Craig interpolation for $\CK$ nor uniform interpolation for $\CK$ or $\WK$ has been investigated. Most surprising, to the best of our knowledge, no other interpolation result is known for \textit{any} Constructive or Intuitionistic modal logic with independent $\Box$ and $\Diam$~\cite{Kur26}, also not via semantic means, except for one. 
We came to know via personal communication with N.~Bezhanishvili that the intuitionistic modal logic $\sf{MIPC}$ \cite{Prior57TimeModality,Bull65,Bull66}, which can be viewed as an $\sf{S5}$-extension of $\IK$, does not have the Craig interpolation property.

In this paper we focus on constructive methods towards uniform interpolation. The seminal work of Pitts provides a proof-theoretic proof of uniform interpolation for $\IL$ using a strongly terminating sequent calculus. It is now a fruitful technique for proving uniform interpolation among many types of modal logic: classical modal logics~\cite{Bil06,Bil22,FerGieGooShi24}, intuitionistic modal logics~\cite{Iemhoff19b,FerGieGooShi24}, modal substructural logics and linear logics~\cite{AliDerOno2014,AkbarTabatabai_Jalali25preprint}, and non-normal modal and conditional logics \cite{AkbarTabatabai_Iemhoff_Jalali24}.  However, these studies only treat mono-modal logics (with $\Box$), but do not cover intuitionistic modal logics with independent $\Box$ and~$\Diam$. 

\paragraph{Our contributions} We positively answer the uniform interpolation question for $\CK$ and~$\WK$. This question is specifically posed by Dalmonte, Grellois and Olivetti in~\cite{DalGreOli21} in which they designed strongly terminating sequent calculi for~$\CK$ and~$\WK$. Our work can be seen as a follow-up in which we 
(1) provide single-succedent variants of these calculi and provide termination and cut-elimination proofs, 
(2) provide a \textit{constructive} proof of uniform interpolation à la Pitts for~$\CK$ and~$\WK$, and 
(3) contribute to The Rocq Prover~\cite{rocq} library for logics between~$\CK$ and~$\IK$~\cite{GroShiClo25} by formalising all our results. All results described in this paper are accompanied by a clickable symbol ``\rocqdoc{}" leading to their formalisation.

Our work establishes the first positive uniform interpolation result among intuitionistic modal logics with independent $\Box$ and $\Diam$. Since uniform interpolation is stronger than Craig interpolation, we immediately obtain the Craig interpolation property for $\CK$ and $\WK$ as well. The constructive approach in Rocq provides us with a uniform interpolation calculator for $\WK$ and $\CK$ already available for some classical and intuitionistic modal logics \cite{Fer23,FerGieGooShi24}.

\section{Preliminaries}\label{sec:prelims}

In this section we fix the syntax and axiomatic calculi
for $\CK$ and $\WK$ as presented in de Groot, Shillito and Clouston's work~\cite{GroShiClo25}. 
For the mechanisation of most of the elements of this section,
we refer to their paper.

  Using a countably infinite set of propositional variables~$\Prop=\{p,q,r, \dots\}$, we define
  the language $\lang$ via the following grammar in BNF notation (\rocqdoc{Syntax.im_syntax.html\#form}):
  \begin{equation*}
    \varphi ::= p\in\Prop \mid \bot \mid \varphi \land \varphi \mid \varphi
    \lor \varphi \mid \varphi \rightarrow \varphi \mid \Box\varphi \mid \Diam\varphi
  \end{equation*}
  We abbreviate $\neg\phi:=\phi\rightarrow\bot$ and $\top:=\neg\bot$.
  We use Greek lowercase letters, e.g. $\phi, \psi, \chi$ and $\delta$, to
  denote formulas, and Greek uppercase letters, e.g. $\Gamma, \Delta, \Phi, \Psi$,
  for multisets of formulas.
  For such a multiset $\Gamma$ we define the two multisets
  $\Box\Gamma := \{\Box\varphi \mid \varphi\in\Gamma\}$
  and $\Box^{-1}\Gamma := \{ \phi \mid \Box\phi \in \Gamma \}$ and similarly for
  $\Diam\Gamma$ and $\Diam^{-1}\Gamma$.
  We also write $\Gamma\uplus\Delta$ for the \emph{disjoint union} of multisets $\Gamma$ and $\Delta$.
  If $\Gamma$ is finite, $\bigvee\Gamma$ denotes the disjunction of
  all formulas in $\Gamma$ 
  and $\bigwedge \Gamma$ denotes the conjunction of all formulas in $\Gamma$, with the convention that $\bigvee \emptyset = \bot$ and $\bigwedge \emptyset = \top$.

To finish with the syntax, we define a notion which 
we use to show termination of backward proof-search in
our sequent calculi.

\begin{definition}[\rocqdoc{Sequent.syntax_facts.html\#weight}]\label{def:weight}
The \textit{weight} $w(\varphi)$ of a formula is defined as follows.
\begin{center}
\begin{tabular}[c]{r c l}
$w(\bot)=w(p)$ & = & $1$
\\
$w(\psi\lor\chi)=w(\psi\rightarrow\chi)$ & = & $w(\psi) + w(\chi) + 1$
\\
$w(\psi\land\chi)$ & = & $w(\psi) + w(\chi) + 2$
\\
$w(\Box \psi)=w(\Diam\psi)$ & = & $w(\psi) + 1$
\\
\end{tabular}
\end{center}
\end{definition}

We provide generalised Hilbert calculi for $\CK$ and $\WK$,
i.e.~axiomatic calculi manipulating \emph{consecutions} of the shape
$\Gamma\vdash\phi$ where $\Gamma$ is a \emph{set} of formulas.
The calculus $\CKH$ for~$\CK$ extends the one for intuitionistic logic $\IL$,
with its axioms and rules, with the necessitation rule \Nec~and axioms $\Kbax$ and $\Kdax$.
The calculus $\WKH$ for~$\WK$ is nothing but the calculus
$\CKH$ augmented with the axiom $\Nax$.
All the axioms and rules just mentioned are presented in Figure~\ref{fig:Hilbert}.
We write $\Gamma\vdash_{\sys{H}}\varphi$ if $\Gamma\vdash\varphi$ is provable in $\sys{H}$,
for $\sys{H}\in\{\CKH,\WKH\}$.

\begin{figure}[t]
\begin{center}
\begin{tabular}{c}
\textbf{Axioms}\\
\end{tabular}
\end{center}
\vspace{-0.8cm}

\begin{center}
\begin{tabular}{l l @{\hspace{0.3cm}} l @{\hspace{0.1cm}} l}
$A_{1}$ & $\varphi\to(\psi\to\varphi)$ & $A_{7}$ & $(\varphi\land\psi)\to\psi$ \\
$A_{2}$ & $(\varphi\to(\psi\to\chi))\to(\varphi\to\psi)\to(\varphi\to\chi)$ & $A_{8}$ &  $(\varphi\rightarrow\psi)\rightarrow(\varphi\rightarrow\chi)\rightarrow(\varphi\rightarrow(\psi\land\chi))$ \\
$A_{3}$ & $\varphi\rightarrow(\varphi\lor\psi)$ & $A_{9}$ & $\bot\rightarrow\varphi$ \\
$A_{4}$ & $\psi\rightarrow(\varphi\lor\psi)$ & $\Kbax$ & \hypertarget{ax:Kb}{$\Box(\varphi\rightarrow\psi)\rightarrow(\Box\varphi\rightarrow\Box\psi)$} \\
$A_{5}$ & $(\varphi\rightarrow\chi)\rightarrow(\psi\rightarrow\chi)\rightarrow((\varphi\lor\psi)\rightarrow\chi)$ & $\Kdax$ & \hypertarget{ax:Kd}{$\Box(\varphi\rightarrow\psi)\rightarrow(\Diam\varphi\rightarrow\Diam\psi)$} \\
$A_{6}$ & $(\varphi\land\psi)\to\varphi$ & $\Nax$ & \hypertarget{ax:N}{$\neg\Diam\bot$} \\
\end{tabular}
\end{center}
\vspace{-0.7cm}

\begin{center}
\begin{tabular}{c}
\textbf{Rules of Inference}\\
\end{tabular}
\end{center}
\vspace{-0.5cm}

\begin{center}
\begin{tabular}{c@{\hspace{1.5cm}}c}
\hypertarget{rule:Ax}{$
\inferLineSkip=3pt
\infer[\scriptstyle\Axr]{\Gamma\vdash\varphi}{\varphi\text{ is an instance of an axiom }}
$} & 
\hypertarget{rule:El}{$
\inferLineSkip=3pt
\infer[\scriptstyle\El]{\Gamma\vdash\varphi}{\varphi\in\Gamma}
$} \\
 & \\
\hypertarget{rule:Nec}{$
\inferLineSkip=3pt
\infer[\scriptstyle\Nec]{\Gamma\vdash\Box\varphi}{\emptyset\vdash\varphi}
$} &
\hypertarget{ax:MP}{$
\inferLineSkip=3pt
\infer[\scriptstyle\MP]{\Gamma\vdash\psi}{
	\Gamma\vdash\varphi
	&
	\Gamma\vdash\varphi\rightarrow\psi}
$}
\end{tabular}
\end{center}
\caption[]{Generalised Hilbert calculi $\CKH$~(\rocqdoc{GHC.CKH.html\#CKH_prv}) and $\WKH$~(\rocqdoc{GHC.CKH.html\#WKH_prv}).}
\label{fig:Hilbert}
\end{figure}

\section{Sequent calculi for $\CK$ and $\WK$}\label{sec:calc}
In this section we introduce the sequent calculi $\GfourCK$~(\rocqdoc{Sequent.G4CK.CKSequents.html\#Provable}) and $\GfourWK$~(\rocqdoc{Sequent.G4WK.WKSequents.html\#Provable}) for $\CK$ and $\WK$, respectively. Sequents in these calculi, which are presented in Figure~\ref{fig:iseq-pc},
are expressions of the shape $\Gamma\Ra\Delta$
where the \emph{antecedent} $\Gamma$ and the \emph{succedent} $\Delta$ are both finite multisets of formulas.
Note that $\Gamma,\phi$ and $\Gamma,\Pi$ stand for $\Gamma\uplus\{\phi\}$ and $\Gamma\uplus\Pi$, respectively.
While antecedents are unconstrained multisets in both calculi,
succedents need to have a cardinality of \emph{exactly one} (i.e.~$|\Delta|=1$) in $\GfourCK$,
and of \emph{at most one} (i.e.~$|\Delta|\leq 1$) in $\GfourWK$.
This technically makes $\GfourCK$ a \emph{single-succedent} calculus,
and $\GfourWK$ a calculus with either \emph{empty} or \emph{singleton} succedents.%
\footnote{In Rocq, instead of using multisets restricted by a cardinality of at most 1,
which is cumbersome, we used the \texttt{option} type%
~(\href{https://rocq-prover.org/doc/V9.1.0/corelib/Corelib.Init.Datatypes.html\#option}{\raisebox{-0.6mm}{\includegraphics[height=0.8em]{orocql.png}}}).
This type takes another type as argument, in our case the type of formulas,
and has two constructor: 
\texttt{None}, simulating~$\emptyset$,
and \texttt{Some($\phi$)}, simulating $\{\phi\}$.}
For simplicity, we call both calculi single-succedent.
Beyond their succedents, these calculi differ on their rule for diamond on the left:
the rule $\DiamL$ of $\GfourCK$ is replaced by the rule $\DiamLW$ in $\GfourWK$.
Note that $\xDiam\Delta$ is non-empty only if $\Delta=\{\Diam\delta\}$.
As shown below, $\DiamLW$ is crucial to prove $\Nax$.
\[
\infer[\scriptstyle\impR]{\Ra\Diam\bot\to\bot}{
    \infer[\scriptstyle\DiamLW]{\Diam\bot\Ra\bot}{
        \infer[\scriptstyle\botL]{\bot\Ra}{}
    }
}
\]
We write $\vdash_{\sys{S}}\Gamma\Ra\Delta$ when $\Gamma\Ra\Delta$ is provable
in $\sys{S}$, with $\sys{S}\in\{\GfourCK,\GfourWK\}$.

\begin{figure}[t!]
\centering
{\small
\hypertarget{rule:botL}{
$\begin{prooftree}
\infer0[$\scriptstyle\botL$]{\Gamma,\bot\Ra\Delta}
\end{prooftree}$}
\quad
\hypertarget{rule:IdP}{
$\begin{prooftree}
\infer0[$\scriptstyle\IdP$]{\Gamma,p\Ra p}
\end{prooftree}$
}
\quad
\hypertarget{rule:andL}{
$\begin{prooftree}
\hypo{\Gamma,\varphi,\psi\Ra\Delta}
\infer1[$\scriptstyle\andL$]{\Gamma,\varphi\land\psi\Ra\Delta}
\end{prooftree}$}
\quad
\hypertarget{rule:andR}{
$\begin{prooftree}
\hypo{\Gamma\Ra\varphi}
\hypo{\Gamma\Ra\psi}
\infer2[$\scriptstyle\andR$]{\Gamma\Ra\varphi\land\psi}
\end{prooftree}$}
\\[0.5cm]
\hypertarget{rule:orL}{
$\begin{prooftree}
\hypo{\Gamma,\varphi\Ra\Delta}
\hypo{\Gamma,\psi\Ra\Delta}
\infer2[$\scriptstyle\orL$]{\Gamma,\varphi\lor\psi\Ra\Delta}
\end{prooftree}$}
\quad
\hypertarget{rule:orR}{
$\begin{prooftree}
\hypo{\Gamma\Ra\varphi_i}
\infer1[$\scriptstyle{\sysrule{$\lor R_i$}(i\in\{1, 2\})} $]{\Gamma\Ra\varphi_1\lor\varphi_2}
\end{prooftree}$}
\hypertarget{rule:impR}{
$\begin{prooftree}
\hypo{\Gamma,\varphi\Ra\psi}
\infer1[$\scriptstyle\impR$]{\Gamma\Ra \varphi\rightarrow\psi}
\end{prooftree}$}
\\[0.5cm]
\hypertarget{rule:impandL}{
$\begin{prooftree}
\hypo{\Gamma,\varphi\rightarrow (\psi\rightarrow\chi)\Ra\Delta}
\infer1[$\scriptstyle\impandL$]{\Gamma,(\varphi\land\psi)\rightarrow\chi\Ra\Delta}
\end{prooftree}$}
\quad
\hypertarget{rule:imporL}{
$\begin{prooftree}
\hypo{\Gamma,\varphi\rightarrow\chi,\psi\rightarrow\chi\Ra\Delta}
\infer1[$\scriptstyle\imporL$]{\Gamma,(\varphi\lor\psi)\rightarrow\chi\Ra\Delta}
\end{prooftree}$}
\\[0.5cm]
\hypertarget{rule:imppL}{
$\begin{prooftree}
\hypo{\Gamma,p,\varphi\Ra\Delta}
\infer1[$\scriptstyle\imppL$]{\Gamma,p,p\rightarrow\varphi\Ra\Delta}
\end{prooftree}$}
\quad
\hypertarget{rule:impimpL}{
$\begin{prooftree}
\hypo{\Gamma,\psi\rightarrow\chi\Ra \varphi\rightarrow\psi}
\hypo{\Gamma,\chi\Ra\Delta}
\infer2[$\scriptstyle\impimpL$]{\Gamma,(\varphi\rightarrow\psi)\rightarrow\chi\Ra\Delta}
\end{prooftree}$}
\\[0.5cm]
\hypertarget{rule:BoxR}{
\begin{prooftree}
\hypo{\openboxes\Gamma \Ra \phi } %
\infer1[$\scriptstyle\BoxR$]{\Gamma \Ra \Box \phi} %
\end{prooftree}}
\quad
\hypertarget{rule:impboxL}{
\begin{prooftree}\hypo{\openboxes\Gamma \Ra \phi}
\hypo{\Gamma,\psi \Ra \Delta }
\infer2[$\scriptstyle\impboxL$]{\Gamma,\Box \phi \rightarrow \psi \Ra \Delta} %
\end{prooftree}}
\\[0.5cm]
\hypertarget{rule:impdiamL}{
\begin{prooftree}\hypo{\openboxes\Gamma,\gamma \Ra \phi}
\hypo{\Gamma,\Diam\gamma , \psi \Ra \Delta}
\infer2[$\scriptstyle\impdiamL$]{\Gamma,\Diam\gamma ,  \Diam \phi \rightarrow \psi \Ra \Delta}
\end{prooftree}}
\\[0.5cm]
\hypertarget{rule:DiamL}{
\begin{prooftree}
\hypo{\openboxes\Gamma,\phi \Ra \psi} %
\infer1[$\scriptstyle\DiamL$]{\Gamma,\Diam\phi \Ra \Diam\psi} %
\end{prooftree}}
\quad
\hypertarget{rule:DiamLW}{
\begin{prooftree}
\hypo{\xBox\Gamma,\phi \Ra \xDiam\Delta} %
\infer1[$\scriptstyle\DiamLW$]{\Gamma,\Diam\phi \Ra \Delta} %
\end{prooftree}}
}

\caption[]{The sequent calculi~$\GfourCK$ and $\GfourWK$.
In the former succedents $\Delta$ are singletons of the shape $\{\delta\}$,
while in the latter succedents are either singletons or the empty set.
The rule $\DiamL$ (resp.~$\DiamLW$) belongs to $\GfourCK$ (resp.~$\GfourWK$) exclusively.}

  \label{fig:iseq-pc}
\end{figure}

$\GfourCK$ and $\GfourWK$ are connected to multiple calculi in the literature.
First, they extend with rules for modalities the strongly terminating calculus $\Gfourip$
for~$\IL$, which has been invented many times~\cite{Vor58,Dyc92,Hud93}.
Second, they are extensions of Iemhoff's single-succedent calculus for~$\iK$~\cite{Iem18}
with rules for~$\Diam$.
Third, our calculi are single-succedent versions of Dalmonte, Grellois and Olivetti's 
multi-succedent calculi for $\CK$ and $\WK$ (which they call $\sys{CCDL}$)~\cite{DalGreOli21}.
As suggested by these last authors, the adaptation of their calculi to a single-succedent setting is straightforward.
The only noteworthy element in this transformation is the use of the single rule $\DiamLW$ in $\GfourWK$ for the treatment of $\Diam$ on the left, in contrast with their counterpart multi-succedent calculus which has two.

\subsection{Termination, cut admissibility and equivalence}\label{sec:props}

In this section we prove of both our calculi that they strongly terminate, eliminate cut, and are equivalent to their corresponding axiomatic system.

First, we prove that $\GfourCK$ and $\GfourWK$ \emph{strongly terminate}.
We say that a calculus strongly terminates if the process of naive backward proof-search,
i.e.~freely and iteratively applying rules backward on a sequent,
necessarily comes to a halt.
To show strong termination, it is sufficient to provide a well-founded ordering
on sequents which decreases upwards in any application of any
rule of our calculi.
To define such an order we exploit the Dershowitz-Manna order on finite multisets~\cite{DerMan}:
a finite multiset $A$ of elements of type $T$ is smaller than another finite multiset $B$ of $T$-elements 
if $A$ is obtained from $B$ by replacing $T$-elements from $B$ by finitely many
$T$-elements which are strictly smaller according to a well-founded order over $T$.

\begin{definition}[Sequent ordering \rocqdoc{Sequent.G4CK.CKOrder.html\#pointed_env_order},\rocqdoc{Sequent.G4WK.WKOrder.html\#pointed_env_order}]
We define the well-founded order $\prec_f$ on formulas using their weight:
$\phi\prec_f\psi$ whenever $w(\phi)<w(\psi)$.
We generate the well-founded Dershowitz-Manna order on multisets $\prec_m$ using $\prec_f$.
Finally, we write $\Gamma \Ra \Delta \prec \Pi \Ra \Sigma$ whenever
$\Gamma\uplus\Delta\prec_m\Pi\uplus\Sigma$.
\end{definition}

Note that a $\GfourWK$ sequent $\Gamma\Ra$ with an empty succedent
is compared in the order over sequents via the multiset $\Gamma\uplus\emptyset = \Gamma$.
A quick inspection of the rules of our calculi shows that any 
premise of a rule is smaller in $\prec$ than its conclusion. 

\begin{proposition}
The calculi $\GfourCK$ and $\GfourWK$ strongly terminate.
\end{proposition}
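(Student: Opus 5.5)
The plan is to show that for every rule of $\GfourCK$ and $\GfourWK$, each premise is strictly smaller than the conclusion in $\prec$. Strong termination then follows because $\prec$ is well-founded. The well-foundedness of $\prec$ comes from two facts: $\prec_f$ is well-founded, being the pullback of $<$ on $\mathbb{N}$ along $w$, and the Dershowitz--Manna theorem~\cite{DerMan} preserves well-foundedness. Any backward proof-search branch is therefore a strictly descending $\prec$-chain and must be finite. Since every rule has at most two premises, K\"onig's lemma gives that the whole search tree is finite. I would also note that, up to permutation of multisets, each sequent has only finitely many backward rule instances, though finiteness of each branch is what matters.

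The core is a case analysis over the rules, checking that the premise multiset $\Gamma'\uplus\Delta'$ is obtained from $\Gamma\uplus\Delta$ by deleting some occurrences and replacing others with finitely many strictly lighter formulas. The propositional rules are handled by the weight function, which is designed for $\Gfourip$.
\begin{itemize}
\item For $\andL$, $\land R$, $\orL$, $\lor R_i$ and $\impR$, each principal formula is replaced by its immediate subformulas, which are strictly lighter.
\item For $\impandL$: $w(\varphi\to(\psi\to\chi)) = w(\varphi)+w(\psi)+w(\chi)+2 < w(\varphi)+w(\psi)+w(\chi)+3 = w((\varphi\land\psi)\to\chi)$. This is why $\land$ carries weight $+2$.
\item For $\imporL$, both $\varphi\to\chi$ and $\psi\to\chi$ are lighter than $(\varphi\lor\psi)\to\chi$.
\item For $\imppL$, $\varphi$ replaces $p\to\varphi$.
\item For $\impimpL$, the left premise replaces $(\varphi\to\psi)\to\chi$ together with the succedent $\Delta$ by $\psi\to\chi$ and $\varphi\to\psi$, both lighter than the principal formula; $\Delta$ is simply removed, which is permitted. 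The right premise replaces the principal formula by $\chi$.
\end{itemize}

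For the modal rules, the key observation is that $\xBox\Gamma$ arises from $\Gamma$ by replacing each $\Box\chi$ by the lighter $\chi$ and deleting all other formulas.
\begin{itemize}
\item For $\BoxR$, $\Box\varphi$ becomes $\varphi$.
\item For $\impboxL$, the left premise deletes $\Box\varphi\to\psi$ and $\Delta$ and adds $\varphi$, which is lighter than $\Box\varphi\to\psi$; the right premise replaces the principal formula by $\psi$.
\item For $\impdiamL$, the left premise replaces $\Diam\gamma$ by $\gamma$ and $\Diam\varphi\to\psi$ by $\varphi$, deleting everything else; the right premise replaces $\Diam\varphi\to\psi$ by $\psi$.
\item For $\DiamL$, $\Diam\varphi$ and $\Diam\psi$ become $\varphi$ and $\psi$.
\item For $\DiamLW$, $\Diam\varphi$ becomes $\varphi$, and $\Delta$ becomes $\xDiam\Delta$. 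The latter is either empty (a deletion) or $\{\delta\}$ replacing $\Diam\delta$.
\end{itemize}
The axioms $\botL$ and $\IdP$ have no premises.

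The main obstacle is making sure the multiset order genuinely tolerates the rules that discard context or succedent, namely $\impimpL$, $\impboxL$, $\impdiamL$ and $\DiamLW$. It also requires that at least one element is strictly replaced, so that the premise is never merely equal to the conclusion. This holds because deletion is a replacement by zero smaller elements, and in every rule the principal formula is always genuinely replaced. In the formalisation I would phrase each case as an instance of a lemma stating that $A\prec_m B$ whenever $A\subseteq (B\setminus\{b\})\uplus C$ with every element of $C$ strictly below $b$, adapted to several replaced elements.
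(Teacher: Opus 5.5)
Your proposal is correct and follows the same route as the paper, which asserts that a quick inspection of the rules shows every premise is $\prec$-smaller than its conclusion and then relies on the well-foundedness of the Dershowitz--Manna order. Your rule-by-rule weight check spells out details the paper leaves implicit, in particular that discarding context or succedent in the left premises of the modal and nested-implication left rules is a harmless deletion in the multiset order.
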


This directly establishes the decidability of provability in $\GfourCK$ and $\GfourWK$.

\begin{proposition}[\rocqdoc{Sequent.G4CK.CKDecisionProcedure.html\#Provable_dec},\rocqdoc{Sequent.G4WK.WKDecisionProcedure.html\#Provable_dec}]
Provability in $\GfourCK$ and $\GfourWK$ is decidable.
\end{proposition}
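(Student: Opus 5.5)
The plan is to derive decidability from the strong termination result just proved, by showing that backward proof-search gives a terminating procedure. Given a sequent $\Gamma\Ra\Delta$, we define the procedure $\mathsf{prv}(\Gamma\Ra\Delta)$ by well-founded recursion on $\prec$. Since $\prec_m$ is the Dershowitz--Manna extension of the well-founded order $\prec_f$, it is well-founded, and so is $\prec$. The procedure returns ``provable'' exactly when some rule instance has $\Gamma\Ra\Delta$ as conclusion and all of its premises are recursively found provable. Axiom instances of $\botL$ and $\IdP$ count as rule instances with no premises.

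For this to be a genuine algorithm, two finiteness facts are needed, and we will check both for each rule.
\begin{enumerate}
\item For a fixed sequent there are only finitely many rule instances with that sequent as conclusion.
\item These instances can be effectively listed.
\end{enumerate}
This holds because every rule of $\GfourCK$ and $\GfourWK$ is determined by the choice of a principal formula occurrence in the finite antecedent or in the succedent, together with a rule name. In $\imppL$ and $\impdiamL$ one must additionally choose a side formula ($p$, or $\Diam\gamma$) from the antecedent. The premises are then computed from the conclusion, since $\xBox\Gamma$ and $\xDiam\Delta$ are computable from the finite multisets $\Gamma$ and $\Delta$. Note also that $\BoxR$, $\DiamL$ and $\DiamLW$ erase the non-boxed context, so they are deterministic given the principal formula. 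Finally, equality of formulas is decidable, so matching the shape of the conclusion against a rule is decidable, and so is multiset decomposition $\Gamma=\Gamma',\phi$.

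The recursive calls are legitimate because of the observation underlying strong termination: every premise of every rule instance is strictly $\prec$-smaller than its conclusion. Hence the recursion is well-founded, and the procedure terminates on every input. Correctness follows by $\prec$-induction. If the procedure says ``provable'', the witnessing rule instances assemble into a derivation. Conversely, if a derivation exists, its last rule is among the enumerated instances, and the induction hypothesis applies to its premises. This gives provability $\vee$ non-provability constructively, as required for the Rocq formalisation, in the form of a decidable proposition.

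The main obstacle I expect is the enumeration step in a constructive, formal setting. Multisets are presumably represented as lists up to permutation, so one must produce an exhaustive finite list of all decompositions of the antecedent matching each rule's pattern. This includes the two-formula patterns of $\imppL$ and $\impdiamL$. One must also prove that every derivable last step appears in the list up to multiset equality, which requires provability to be invariant under permutation of the antecedent. I would handle this by first proving a lemma that, for each rule, lists all instances with a given conclusion, and then combining these lemmas with a generic ``finitely branching, well-founded implies decidable'' lemma.
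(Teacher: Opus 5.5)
Your proposal is correct and follows the paper's route. The paper derives decidability directly from strong termination, i.e.\ from the fact that every premise is $\prec$-smaller than its conclusion in the well-founded Dershowitz--Manna order. It relies implicitly on the finite, effective branching of backward proof-search, which is exactly the well-founded recursion you make explicit.
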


In Theorem~\ref{thm:equiv}, at the end of this section, we prove that our calculi capture $\CK$ and $\WK$, respectively.
In this light, the proposition above constitutes a \emph{constructive and mechanised} proof of
decidability for these logics.
Such proofs for the two logics were already given by Dalmonte et al.~\cite{DalGreOli21}, 
constructively but not mechanised,
and for $\CK$ in particular by Mendler and de Paiva~\cite{MenPai05},
though neither constructive nor mechanised.

Second, we aim at proving cut elimination for both calculi,
building on proofs of cut admissibility via local proof transformations.
To get there, we need to follow a path,
first established by Dyckhoff and Negri~\cite{DycNeg00},
paved by a succession of technical lemmas, and
leading to the admissibility of contraction.
We omit the majority of these intermediate lemmas,
referring to our mechanisation and Dyckhoff and Negri's work~\cite{DycNeg00},
except for the admissibility of some useful rules. 

\begin{lemma}[\rocqdoc{Sequent.G4CK.CKSequentProps.html\#generalised_weakeningR},\rocqdoc{Sequent.G4WK.WKSequentProps.html\#generalised_weakeningR},%
\rocqdoc{Sequent.G4CK.CKSequentProps.html\#generalised_axiom},\rocqdoc{Sequent.G4WK.WKSequentProps.html\#generalised_axiom},%
\rocqdoc{Sequent.G4CK.CKSequentProps.html\#weak_ImpL},\rocqdoc{Sequent.G4WK.WKSequentProps.html\#weak_ImpL}]
\label{lemma:weakening}
The following rules are admissible in $\GfourCK$ and $\GfourWK$.
\begin{center}
\begin{tabular}{c @{\hspace{1cm}} c @{\hspace{1cm}} c}
\hypertarget{rule:wk}{
$\infer[\scriptstyle\wk]{\Gamma,\Pi\Ra\Delta}{\Gamma\Ra\Delta}$
}
&
\hypertarget{rule:Id}{$\infer[\scriptstyle\Id]{\Gamma,\phi\Ra\phi}{}$}
&
\hypertarget{rule:impL}{
$
\infer[\scriptstyle\impL]{\Gamma,\phi\to\psi\Ra\Delta}{
    \Gamma\Ra\phi
    &
    \psi,\Gamma\Ra\Delta
}
$
}
\\
\end{tabular}
\end{center}
\end{lemma}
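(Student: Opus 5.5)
We prove the three admissibility results in the order $\wk$, then $\Id$, then $\impL$, since each later one uses the earlier ones.

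\textbf{Weakening.} We show depth-preserving admissibility of $\wk$ by induction on the height of the derivation of $\Gamma\Ra\Delta$, in both $\GfourCK$ and $\GfourWK$. If the last rule is $\botL$ or $\IdP$, the conclusion $\Gamma,\Pi\Ra\Delta$ is again an instance of the same axiom. For the propositional rules, we apply the induction hypothesis to each premise, adding $\Pi$ to its antecedent, and then reapply the rule. The modal rules need more care, because their premises use $\xBox\Gamma$. For $\BoxR$, $\impboxL$, $\impdiamL$, $\DiamL$ and $\DiamLW$, the premise with antecedent $\xBox\Gamma(,\ldots)$ is weakened by $\xBox\Pi$, since $\xBox(\Gamma\uplus\Pi)=\xBox\Gamma\uplus\xBox\Pi$. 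The other premises are weakened by $\Pi$ as before. The rules are then reapplied. Heights do not increase, so the result is height-preserving.

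\textbf{Generalised axiom.} We prove $\vdash\Gamma,\phi\Ra\phi$ for all $\Gamma$ by induction on $w(\phi)$, generalising over $\Gamma$.
\begin{itemize}
\item Atoms are covered by $\IdP$, and $\bot$ by $\botL$.
\item For $\land$ and $\lor$, the proof is the usual one: $\andL$ followed by $\andR$, and $\orL$ followed by $\orRone$ or $\orRtwo$.
\item For $\Box\psi$, we apply $\BoxR$, which leaves $\xBox\Gamma,\psi\Ra\psi$; this follows from the induction hypothesis.
\item For $\Diam\psi$, we use $\DiamL$ in $\GfourCK$. In $\GfourWK$ we use $\DiamLW$, where $\xDiam\{\Diam\psi\}=\{\psi\}$. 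Either way the premise is $\xBox\Gamma,\psi\Ra\psi$, which is again given by the induction hypothesis.
\item Implication $\phi=\psi\to\chi$ is the delicate case, because left rules depend on the shape of $\psi$. We apply $\impR$ to get $\Gamma,\psi,\psi\to\chi\Ra\chi$ and split on $\psi$:
\begin{itemize}
\item If $\psi=p$, use $\imppL$ and then the induction hypothesis on $\chi$.
\item If $\psi=\bot$, use $\botL$.
\item If $\psi=\psi_1\land\psi_2$ or $\psi_1\lor\psi_2$, decompose with $\impandL$ or $\imporL$ together with the corresponding left rule on $\psi$. This reduces the goal to identities for the smaller implications $\psi_1\to(\psi_2\to\chi)$, or to the sequent $\psi_i,\psi_i\to\chi\Ra\chi$. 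Because the weight assigns $+2$ to $\land$, these have strictly smaller weight, so the induction applies.
\item If $\psi=\psi_1\to\psi_2$, use $\impimpL$. The right premise $\Gamma,\psi,\chi\Ra\chi$ holds by the induction hypothesis. For the left premise, apply $\impR$ and again $\impimpL$-style reasoning, using the induction hypothesis on $\psi_1$ and $\psi_2$, as in Dyckhoff and Negri.
\item If $\psi=\Box\psi'$, use $\impboxL$ with left premise $\xBox(\Gamma,\Box\psi')\Ra\psi'$, which follows from the induction hypothesis.
\item If $\psi=\Diam\psi'$, use $\impdiamL$ with $\gamma=\psi'$; the left premise is $\xBox\Gamma,\psi'\Ra\psi'$.
\end{itemize}
In the $\Box$ and $\Diam$ cases, the right premise is closed by the induction hypothesis on $\chi$ plus weakening.
\end{itemize}

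\textbf{The rule $\impL$.} We show its admissibility by induction on $w(\phi)$, and within that on the height of the derivation of $\Gamma\Ra\phi$, following Dyckhoff and Negri.
\begin{itemize}
\item If $\phi=p$, we need $p$ present in the antecedent to use $\imppL$. We therefore first prove an auxiliary fact: if $\Gamma\Ra p$ is derivable, then $\Gamma,p\to\psi\Ra\Delta$ is derivable from $\psi,\Gamma\Ra\Delta$. This is by induction on the derivation of $\Gamma\Ra p$, permuting the left rules used there below. The base case $\IdP$ gives $\imppL$ directly.
\item If $\phi$ is a conjunction, disjunction or implication, we use invertibility of $\andR$ and $\impR$ and the $\Gfourip$-style left rules, reducing to $\impL$ on smaller formulas. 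For disjunctions we case on which $\orRone$/$\orRtwo$ was used, or permute.
\item If $\phi=\Box\phi'$, the derivation of $\Gamma\Ra\Box\phi'$ either ends with $\BoxR$, in which case $\impboxL$ applies directly, or ends with a left rule, which we permute downward.
\item If $\phi=\Diam\phi'$, a derivation of $\Gamma\Ra\Diam\phi'$ must eventually use $\DiamL$ or $\DiamLW$ on some $\Diam\gamma\in\Gamma$. This yields $\xBox\Gamma,\gamma\Ra\phi'$, which is exactly the left premise of $\impdiamL$.
\end{itemize}

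The main obstacle is the permutation step in $\impL$. We must show that when the derivation of $\Gamma\Ra\phi$ ends in a left rule, the $\impL$ instance can be pushed into its premises. This needs invertibility lemmas for the left rules and care with the modal rules: there the antecedent is reduced to $\xBox\Gamma$, so the implication $\phi\to\psi$ disappears and is recovered only by weakening. I would try first to establish invertibility of all left rules except $\impimpL$ (partial) and the modal ones, and to treat the modal-ending cases directly via weakening.
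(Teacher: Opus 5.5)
Your overall strategy (height-preserving weakening, identity by induction on weight, $\impL$ by induction on the derivation of $\Gamma\Ra\phi$ using inversion lemmas) is the Dyckhoff--Negri path the paper defers to, and your weakening argument is fine. There is, however, a genuine hole in the $\GfourWK$ case. A derivation of $\Gamma\Ra\phi$ with $\phi$ not a diamond (an atom, $\Box\phi'$, a conjunction, \dots) may end in $\DiamLW$, and then its premise $\xBox\Gamma',\gamma\Ra{}$ has an \emph{empty} succedent. You cannot ``permute downward'' here, because the premise no longer has $\phi$ on the right, so no induction hypothesis applies. The only move is to reapply $\DiamLW$ to $\Gamma',\Diam\gamma,\phi\to\psi\Ra\Delta$. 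When $\Delta=\{\Diam\delta\}$, its premise is $\xBox\Gamma',\gamma\Ra\delta$, and you can get this from $\xBox\Gamma',\gamma\Ra{}$ only by \emph{right} weakening ($\Gamma\Ra{}$ implies $\Gamma\Ra\chi$). You prove only left weakening. Right weakening is admissible in $\GfourWK$ by an easy height induction. Its only non-trivial case is $\DiamLW$ itself: when $\chi=\Diam\chi_0$, apply the induction hypothesis to the premise to get $\ldots\Ra\chi_0$. You still have to state and prove it, and you need it again for the invertibility of $\impR$ and $\andR$ in $\GfourWK$, both of which you invoke.

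Several further steps need repair.
\begin{itemize}
\item In the identity proof, the $\land$- and $\lor$-antecedent subcases do not reduce to identity instances. You are left with $\Gamma,\psi_1,\psi_2,\psi_1\to(\psi_2\to\chi)\Ra\chi$ and $\Gamma',\psi_i,\psi_i\to\chi\Ra\chi$. These follow from the induction hypothesis only via (height-preserving) invertibility of $\impR$, proved before the identity lemma, or via an induction hypothesis strengthened to modus-ponens form.
\item In the $\impimpL$ subcase of the identity proof, no further $\impimpL$ reasoning is needed. The left premise $\Gamma,\psi_1\to\psi_2,\psi_2\to\chi\Ra\psi_1\to\psi_2$ is directly an identity instance of smaller weight.
\item Your $\Diam$ case of $\impL$ is false as stated. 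Both $\Diam q\lor\Diam r\Ra\Diam(q\lor r)$ and $\bot\Ra\Diam\phi'$ are derivable, yet $\Gamma$ contains no $\Diam\gamma$ yielding $\xBox\Gamma,\gamma\Ra\phi'$. This case must go through the same height permutation, with $\impdiamL$ used only at the point where $\DiamL$ or $\DiamLW$ is applied.
\item When the derivation ends in $\impboxL$ or $\impdiamL$, weakening handles only the left premise. The right premise needs the partial inversion $\Gamma,\Box\alpha\to\beta,\psi\Ra\Delta\Longrightarrow\Gamma,\beta,\psi\Ra\Delta$ (resp.\ with $\Diam\alpha\to\beta$), exactly like your partial inversion of $\impimpL$.
\item The outer induction on $w(\phi)$ is unnecessary. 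Height induction alone, generalising over $\psi$ and $\Delta$, handles $\andR$ by two nested induction-hypothesis calls, the outer one with $\phi_2\to\psi$ in place of $\psi$. This also spares you $\andR$-inversion.
\end{itemize}
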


Next, we state the admissibility of contraction.

\begin{proposition}[\rocqdoc{Sequent.G4CK.CKSequentProps.html\#contraction},\rocqdoc{Sequent.G4WK.WKSequentProps.html\#contraction}]
The contraction rule is admissible in $\GfourCK$ and $\GfourWK$.
$$
\hypertarget{rule:cntr}{
\infer[\scriptstyle\cntr]{\Gamma,\psi\Ra\Delta}{\Gamma,\psi,\psi\Ra\Delta}
}
$$
\end{proposition}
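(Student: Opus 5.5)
We prove admissibility of \cntr{} by strong induction on the weight $w(\psi)$ of the contracted formula, with a subsidiary induction on the height of the derivation of $\Gamma,\psi,\psi\Ra\Delta$. We follow Dyckhoff and Negri~\cite{DycNeg00}. Before the main argument we establish the usual inversion lemmas, which are height-preserving where possible:
\begin{itemize}
\item \andL{}, \orL{}, \impandL{} and \imporL{} are invertible;
\item the right premise of \impimpL{}, \impboxL{} and \impdiamL{} is derivable from the conclusion, i.e.\ from $\Gamma,\theta\to\chi\Ra\Delta$ we get $\Gamma,\chi\Ra\Delta$;
\item for \imppL{}, $\Gamma,p\to\varphi\Ra\Delta$ yields $\Gamma,\varphi\Ra\Delta$;
\item the Dyckhoff--Negri lemma: from $\Gamma,(\varphi\to\psi)\to\chi\Ra\Delta$ we obtain $\Gamma,\psi\to\chi,\varphi\Ra\Delta$.
\end{itemize}
We also use weakening (Lemma~\ref{lemma:weakening}), which we take to be height-preserving.

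In the main argument we look at the last rule of the derivation of $\Gamma,\psi,\psi\Ra\Delta$.

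\emph{$\psi$ not principal.} If the rule is propositional, we apply the induction hypothesis on height to the premises and reapply the rule. If it is a modal rule (\BoxR, \impboxL, \impdiamL, \DiamL, \DiamLW), the premise has an antecedent of the shape $\xBox\Gamma$ (plus possibly one formula). Two cases arise:
\begin{itemize}
\item If $\psi=\Box\theta$, both copies become $\theta,\theta$ in the modal premise. Since $w(\theta)<w(\psi)$, we contract them by the induction hypothesis on weight.
\item If $\psi$ is not boxed, both copies vanish from the modal premise and we reapply the rule directly.
\end{itemize}
In rules with a kept context, such as the right premise of \impboxL{} and \impdiamL{}, we use the height induction hypothesis.

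\emph{One copy principal.}
\begin{itemize}
\item For \andL{}, \orL{}, \impandL{} and \imporL{}, we invert the other copy and contract the components. These are lighter, except in \impandL{} and \imporL{}, where $w$ is chosen exactly so that $\varphi\to(\psi\to\chi)$ and $\varphi\to\chi,\psi\to\chi$ are lighter than the principal formula.
\item For \imppL{}, we invert the other copy of $p\to\varphi$ to $\varphi$ and contract $\varphi$.
\item For \impimpL{} with principal $(\varphi\to\theta)\to\chi$, the left premise is $\Gamma,(\varphi\to\theta)\to\chi,\theta\to\chi\Ra\varphi\to\theta$. We first invert \impR{} to get $\Gamma,(\varphi\to\theta)\to\chi,\theta\to\chi,\varphi\Ra\theta$. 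Then we apply the Dyckhoff--Negri inversion to the remaining copy, which gives $\theta\to\chi,\varphi$, contract the lighter formulas $\theta\to\chi$ and $\varphi$, and rebuild with \impR{}. The right premise is handled by inverting the second copy to $\chi$ and contracting.
\item For \impboxL{} and \impdiamL{}, the left premise lives in the boxed context and is unaffected, since the implication is not boxed. In the right premise the other copy is inverted to its consequent and contracted.
\item For \impdiamL{}, a further issue arises when $\psi=\Diam\gamma$ is the side diamond. One copy of $\Diam\gamma$ remains in the right premise, so we just use the height induction hypothesis.
\item For \DiamL{} and \DiamLW{} with principal $\Diam\varphi$, the other copy of $\Diam\varphi$ disappears in the premise, so the rule applies directly without contraction.
\end{itemize}

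The main obstacle is the \impimpL{} case, which requires the auxiliary Dyckhoff--Negri inversion lemma. That lemma must itself be proved by induction on height, and it now has to treat the modal rules. The delicate subcase is when the premise of a modal rule discards the context: there $(\varphi\to\psi)\to\chi$ simply vanishes, and weakening with $\psi\to\chi,\varphi$ is needed. That weakening is non-harmful only because these formulas are not boxed, so they also vanish from the boxed context.
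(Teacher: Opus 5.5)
Your overall route is the one the paper points to, since it defers to Dyckhoff--Negri and the Rocq development. It uses weight induction with a height subinduction, height-preserving inversions, the weight hypothesis on $\theta$ when a modal rule unpacks $\Box\theta$, and a Dyckhoff--Negri lemma for the $\impimpL$ case.

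The step that fails as you describe it is the \emph{principal} case of that auxiliary lemma, which you do not discuss. If $(\varphi\to\psi)\to\chi$ is principal, the premises are $\Gamma,\psi\to\chi\Ra\varphi\to\psi$ and $\Gamma,\chi\Ra\Delta$. Neither contains the formula, so the lemma's own induction hypothesis does not apply. Inverting $\impR$ gives $\Gamma,\varphi,\psi\to\chi\Ra\psi$. The only way to combine this with the weakened $\Gamma,\varphi,\chi\Ra\Delta$ is the admissible $\impL$ of Lemma~\ref{lemma:weakening}, and its conclusion is $\Gamma,\varphi,\psi\to\chi,\psi\to\chi\Ra\Delta$. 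The second copy of $\psi\to\chi$ appears because no rule of these calculi keeps the implication in its left premise, and removing it is exactly contraction. So your single-copy lemma cannot be obtained by a stand-alone height induction placed before the main argument.

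The repair is easy, in either of two ways:
\begin{enumerate}
\item State the lemma with the duplicate, i.e.\ from $\Gamma,(\varphi\to\psi)\to\chi\Ra\Delta$ infer $\Gamma,\varphi,\psi\to\chi,\psi\to\chi\Ra\Delta$. This is provable beforehand by height induction from $\impR$-inversion, weakening and $\impL$. The extra, lighter copy is then contracted by the weight hypothesis in the main $\impimpL$ case.
\item Prove your single-copy version inside the weight induction, contracting $\psi\to\chi$, which is lighter than $(\varphi\to\psi)\to\chi$.
\end{enumerate}
By comparison, the modal context-discarding subcase you flag as the delicate one is routine.

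There are two smaller points. First, your justification in that modal subcase is inaccurate. $\varphi$ may be boxed, as in $(\Box\alpha\to\beta)\to\chi$; then $\alpha$ enters $\xBox$ of the new antecedent and must be weakened into the modal premise. This is harmless, but it is not true that the new formulas vanish from the boxed context. Second, you invert $\impR$ in the $\impimpL$ case without listing it among your inversions. In $\GfourWK$ its $\DiamLW$ case needs a small extra lemma that weakens an empty succedent. If $\Gamma,\Diam\phi\Ra\alpha\to\Diam\beta$ comes from $\xBox\Gamma,\phi\Ra$, then to rebuild $\Gamma,\Diam\phi,\alpha\Ra\Diam\beta$ with $\DiamLW$ you must produce $\xBox\Gamma,\xBox\{\alpha\},\phi\Ra\beta$.
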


We leverage contraction to prove the admissibility of the cut rule.
Our proof goes by primary induction on the weight of the cut formula
and secondary (well-founded) induction on $\prec$,
a standard method for terminating calculi~\cite{GorRamShi21,GorShi22,Shi23,ShiGieGorIem23}.

\begin{theorem}[\rocqdoc{Sequent.G4CK.CKCut.html\#additive_cut},\rocqdoc{Sequent.G4WK.WKCut.html\#additive_cut}]
The additive cut rule is admissible in $\GfourCK$ and $\GfourWK$.
$$
\hypertarget{rule:cut}{
\infer[\scriptstyle\cut]{\Gamma\Ra\Delta}{
    \Gamma\Ra\phi
    &
    \phi,\Gamma\Ra\Delta
    }
}
$$
\end{theorem}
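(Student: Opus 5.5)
We prove the admissibility of \cut{} by primary induction on the weight $w(\phi)$ of the cut formula and secondary well-founded induction on $\prec$ applied to the conclusion $\Gamma\Ra\Delta$. Concretely, we assume that every cut on a formula of smaller weight is admissible, and that every cut on $\phi$ whose conclusion is $\prec$-smaller than $\Gamma\Ra\Delta$ is admissible. We then take derivations of the two premises $\Gamma\Ra\phi$ and $\phi,\Gamma\Ra\Delta$ and analyse the last rule of the left premise. Throughout we use the admissible rules of Lemma~\ref{lemma:weakening} (\wk, \Id, \impL) and admissible contraction (\cntr). We also use the standard invertibility lemmas of Dyckhoff and Negri~\cite{DycNeg00}, which we take from the mechanisation: invertibility of \andL, \orL, \impR, \impandL, \imporL, of the right premise of \impimpL{} and \impboxL, and so on.

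\emph{Left premise not principal on $\phi$.} Suppose $\phi$ is not principal in the last rule of $\Gamma\Ra\phi$. Then that rule is a left rule ($\andL$, $\orL$, $\impandL$, $\imporL$, $\imppL$, $\impimpL$, $\impboxL$, $\impdiamL$), or $\botL$. The $\botL$ case is immediate. Otherwise we apply the inverted version of the same rule to the right premise $\phi,\Gamma\Ra\Delta$ and cut on $\phi$ premisewise. Each new cut has a conclusion that is a premise of the original rule with succedent $\Delta$, so it is $\prec$-smaller than $\Gamma\Ra\Delta$ and falls under the secondary hypothesis. We then reapply the rule.

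Two sub-cases need care.
\begin{itemize}
\item The left premise of \impimpL{} (and of \impboxL, \impdiamL) does not carry the succedent $\phi$. It is simply reused unchanged.
\item For $\imppL$, the atom $p$ must remain available. This holds because $p$ occurs in $\Gamma$.
\end{itemize}

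\emph{Left premise principal on $\phi$.} Now suppose the left premise ends in a right rule introducing $\phi$, or in \IdP{} (which is trivial by contraction). We then inspect the right premise. If $\phi$ is not principal there, we permute the cut upward into its premises. For one-premise propositional left rules and for the second premise of two-premise rules this is direct, using the secondary hypothesis with \wk{} to align contexts. The modal rules \BoxR, \impboxL, \impdiamL, \DiamL, \DiamLW{} erase the context to $\xBox\Gamma$. If $\phi$ is not boxed, it simply disappears and we conclude by \wk. If $\phi=\Box\chi$, the left premise ends in \BoxR{} with $\xBox\Gamma\Ra\chi$, and we cut on $\chi$, which has smaller weight, against the modal premise $\xBox\Gamma,\chi,\ldots\Ra\ldots$. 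If $\phi=\Diam\chi$ ends the left premise by \DiamL{} (respectively \DiamLW{} with succedent $\Diam\chi$), then in the right premise $\Diam\chi$ can be principal for \DiamL/\DiamLW{} or serve as the $\Diam\gamma$ of \impdiamL. In either case the principal diamond premise $\xBox\Gamma,\chi',\ldots$ combines with the upper sequent $\xBox\Gamma,\gamma'\Ra\chi$ of the left derivation. We cut on $\chi$ (smaller weight), followed by \cntr{} on the duplicated $\xBox\Gamma$. For \DiamLW{} we must also check that the succedent $\xDiam\Delta$ matches.

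\emph{Both premises principal.} These are the key cases, and they follow Dyckhoff--Negri. The propositional ones ($\land$, $\lor$, atomic \imppL, and the implication cases against \impandL, \imporL, \impimpL, \imppL) reduce to cuts on smaller formulas. The most delicate one is $\phi=(\alpha\to\beta)\to\gamma$ principal on the left by \impR{} and on the right by \impimpL; here we use the Dyckhoff--Negri argument with an auxiliary cut of equal weight but $\prec$-smaller conclusion. For $\phi=\Box\chi\to\psi$ against \impboxL: from $\Gamma,\Box\chi\Ra\psi$ (after inverting \impR) and $\xBox\Gamma\Ra\chi$ we first derive $\Gamma\Ra\Box\chi$ by \BoxR. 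We then cut on $\Box\chi$ and on $\psi$, both of smaller weight than $\phi$. The case $\phi=\Diam\chi\to\psi$ against \impdiamL{} is analogous: we build $\Gamma,\Diam\gamma\Ra\Diam\chi$ using \DiamL{} (respectively \DiamLW) from $\xBox\Gamma,\gamma\Ra\chi$.

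I expect the main obstacle to be the interaction of cut with the context-erasing modal rules. Two points need attention: checking that the $\prec$-measure genuinely decreases in the \impimpL{} principal case, and checking that in $\GfourWK$ the empty-succedent bookkeeping of \DiamLW{} lines up when $\phi=\Diam\chi$ and $\Delta$ is empty or non-diamond.
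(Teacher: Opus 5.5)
Your overall architecture matches the paper's: primary induction on the weight of the cut formula, secondary induction on $\prec$ for the conclusion, and contraction plus the Dyckhoff--Negri lemmas. Your modal cases are also right: cutting on $\chi$ when $\Box\chi$ meets a context-erasing rule, the $\Diam\chi$ cases against \DiamL, \DiamLW{} and \impdiamL, and the principal reductions for $\Box\chi\to\psi$ and $\Diam\chi\to\psi$.

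There is, however, one propositional case that the tools you list do not handle. Suppose the left premise ends in a right rule and the right premise $\phi,\Gamma',(\alpha\to\beta)\to\gamma\Ra\Delta$ ends in \impimpL{} on that context formula. Its first premise is $\phi,\Gamma',\beta\to\gamma\Ra\alpha\to\beta$. An additive cut needs $\phi$ derived in a matching context, and neither route you name gets you there:
\begin{itemize}
\item Weakening the left premise breaks the $\prec$-decrease, since the new cut's conclusion still contains $(\alpha\to\beta)\to\gamma$ plus extra formulas, while $\Delta$ may be tiny or empty.
\item None of the inversions you list replaces $(\alpha\to\beta)\to\gamma$ by $\beta\to\gamma$ in an antecedent, because the entailment goes the other way.
\end{itemize}
With a context-independent cut this permutation would cost nothing. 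With the additive cut and a conclusion-based measure you need a separate lemma from the Dyckhoff--Negri path, the one used for contracting $(\alpha\to\beta)\to\gamma$: if $\Gamma,(\alpha\to\beta)\to\gamma\Ra\Delta$ is derivable, then so is $\Gamma,\alpha,\beta\to\gamma\Ra\Delta$. It survives the modal rules, which discard non-boxed context formulas. With this lemma, invert \impR{} in the first premise and cut on $\phi$. The resulting conclusion $\Gamma',\alpha,\beta\to\gamma\Ra\beta$ is $\prec$-smaller than $\Gamma',(\alpha\to\beta)\to\gamma\Ra\Delta$. Then reapply \impR{} and \impimpL.

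Two smaller points.
\begin{itemize}
\item In $\GfourWK$ the left premise may end in \DiamLW{} with a non-diamond succedent $\phi$, that is, from $\xBox\Gamma_0,\psi\Ra$ with $\Gamma=\Gamma_0,\Diam\psi$. This fits neither of your two cases, and \DiamLW{} is not invertible. Still, $\Gamma\Ra\Delta$ follows directly by \DiamLW. When $\Delta=\{\Diam\delta\}$ you first weaken the empty succedent of $\xBox\Gamma_0,\psi\Ra$ to $\delta$, which is an easily admissible rule.
\item In the principal \impR/\impimpL{} case no equal-weight cut is needed. The reduction only cuts on $\alpha\to\beta$, $\beta\to\gamma$ and $\gamma$, all strictly lighter than $(\alpha\to\beta)\to\gamma$ under the paper's weight. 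So the measure check you anticipate there does not arise.
\end{itemize}
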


As this theorem is proved syntactically via local proof transformations,
it entails that both calculi \emph{eliminate} cut: 
any proof in the calculus augmented with the cut rule
can be transformed into a proof without instances of cut.

Third, and finally, we show that $\GfourCK$ and $\GfourWK$
capture exactly the logics $\CK$ and $\WK$.
For this, we exploit both the decidability of the sequent
calculi and the admissibility of cut.

\begin{theorem}[\rocqdoc{Sequent.G4CK.CKSoundComplete.html\#G4CK_sound_CKH},\rocqdoc{Sequent.G4CK.CKSoundComplete.html\#G4CK_compl_CKH},%
\rocqdoc{Sequent.G4WK.WKSoundComplete.html\#G4WK_sound_WKH},\rocqdoc{Sequent.G4WK.WKSoundComplete.html\#G4WK_compl_WKH}]\label{thm:equiv}
The following equivalences hold.
\begin{center}
\begin{tabular}{c @{\hspace{0.5cm}} c @{\hspace{0.5cm}} c}
   $\vdash_{\GfourCK}\Gamma \Ra \Delta$ & if and only if & $\Gamma \vdash_{\CK} \Delta$ \\
   $\vdash_{\GfourWK}\Gamma \Ra \Delta$ & if and only if & $\Gamma \vdash_{\WK} \Delta$
\end{tabular}
\end{center}
\end{theorem}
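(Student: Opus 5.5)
We prove the two directions separately, reading $\Gamma\vdash_{\CK}\Delta$ as $\Gamma\vdash_{\CKH}\delta$ when $\Delta=\{\delta\}$. For $\WK$ with $\Delta=\emptyset$ we read it as $\Gamma\vdash_{\WKH}\bot$.

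\emph{Soundness} (sequent to Hilbert). We proceed by induction on the height of derivations in $\GfourCK$ (resp.~$\GfourWK$), showing that every rule preserves derivability of $\Gamma\vdash\bigvee\Delta$. The propositional rules are handled exactly as for $\Gfourip$, using $A_1$--$A_9$ and the deduction theorem, which holds for the generalised Hilbert calculi because \Nec{} only applies to empty contexts.

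For $\BoxR$, from $\openboxes\Gamma\vdash\phi$ we obtain $\vdash\bigwedge\openboxes\Gamma\to\phi$. Applying \Nec{} and $\Kbax$ then yields $\Box\openboxes\Gamma\vdash\Box\phi$; here we need the derived fact that $\Box$ commutes with finite conjunctions, which is standard from $\Kbax$. $\impboxL$ is $\BoxR$ followed by modus ponens. $\DiamL$ is obtained similarly, using $\Kdax$ in the form $\Box\openboxes\Gamma\vdash\Diam\phi\to\Diam\psi$, derived from \Nec{} applied to $\bigwedge\openboxes\Gamma\to(\phi\to\psi)$ together with $\Kbax$ and $\Kdax$. $\impdiamL$ is analogous to $\DiamL$, followed by modus ponens on $\Diam\phi\to\psi$.

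For $\DiamLW$ with empty $\xDiam\Delta$, the premise gives $\openboxes\Gamma,\phi\vdash\bot$. By $\Kdax$ we get $\Box\openboxes\Gamma\vdash\Diam\phi\to\Diam\bot$, and then $\Nax$ gives $\bot$, hence $\bigvee\Delta$ by $A_9$.

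\emph{Completeness} (Hilbert to sequent). We proceed by induction on Hilbert derivations of $\Gamma\vdash\phi$, showing $\vdash\Gamma\Ra\phi$. Each case uses one earlier result:
\begin{itemize}
\item \El{} follows from \Id{}.
\item \MP{} follows from two cuts. First, $\Gamma\Ra\phi\to\psi$ together with $\Gamma,\phi\to\psi,\phi\Ra\psi$ gives $\Gamma,\phi\Ra\psi$; the latter sequent is obtained by \impL{} and \Id{}. A second cut with $\Gamma\Ra\phi$ then gives $\Gamma\Ra\psi$. Both cuts use the admissibility theorem above.
\item \Nec{} is $\BoxR$ followed by \wk{}, since $\openboxes\Gamma\supseteq\emptyset$.
\item The axioms are derived via \impR{} and the admissible \impL{} and \Id{}. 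For $\Kbax$, use $\BoxR$ on $\phi\to\psi,\phi\Ra\psi$. For $\Kdax$, use $\DiamL$ (resp.~$\DiamLW$) on the same sequent. $\Nax$ is given in the text.
\end{itemize}
Finally, $\Gamma$ may be infinite as a set in the Hilbert calculus. However, any derivation uses only finitely many hypotheses, so we restrict to a finite multiset before passing to sequents.

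The main obstacle is the completeness direction. The $\Gfourip$-style rules lack a general \impL{} and contraction, so \MP{} and the propositional axioms (notably $A_2$, $A_5$ and $A_8$) become easy only once cut, \impL{}, \Id{}, \wk{} and \cntr{} are available. We will therefore invoke those admissibility results systematically. Decidability is not needed beyond ensuring the statement is well-posed.
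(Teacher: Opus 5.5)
Your proposal is correct and follows the same route as the paper. For soundness, you show each rule of $\GfourCK$/$\GfourWK$ admissible in the Hilbert calculus, with $\Nax$ handling the empty-succedent case of $\DiamLW$; for completeness, you derive the axioms and show the Hilbert rules admissible, with cut handling $\MP$. Two small remarks: for $\GfourWK$ with $\Delta=\emptyset$ your induction only yields $\Gamma\Ra\bot$, so you need one more cut against the $\botL$ instance $\Gamma,\bot\Ra$ to obtain $\Gamma\Ra$. Also, the paper uses decidability only in the Rocq formalisation, to produce an explicit (\texttt{Type}-level) sequent proof from an opaque (\texttt{Prop}-level) Hilbert derivation, not for well-posedness; your pen-and-paper argument indeed does not need it.
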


\begin{proof}
We focus on $\GfourWK$, as the equivalence for $\GfourCK$ is treated similarly.
Note that we abused notation: $\Delta$ in $\Gamma \vdash_{\WK} \Delta$
denotes the \emph{formula} $\phi$ if the succedent $\Delta=\{\phi\}$,
and $\bot$ if $\Delta=\emptyset$. 

For the left to right direction it suffices to show all rules of $\GfourWK$ are admissible in $\WK$.

For the right to left direction,
the pen-and-paper proof boils down to showing all axioms provable
and all rules of the axiomatic system admissible in $\GfourWK$.
The admissibility of cut crucially helps in the case of $\MP$.
In the mechanisation, we encounter the following obstacle.
We are trying to prove that $\vdash_{\GfourWK}\Gamma \Ra \Delta$, 
a statement we mechanised in Rocq in the type \texttt{Type},
which requires the construction of an \emph{explicit} proof.
As an assumption we have $\Gamma \vdash_{\WK} \Delta$,
a statement in the type \texttt{Prop}, which promises the 
opaque \emph{existence} of a proof without giving an explicit one.
We are then in a dead end: we cannot build an explicit proof
for $\vdash_{\GfourWK}\Gamma \Ra \Delta$ by relying on the opaque one
of $\Gamma \vdash_{\WK} \Delta$.
To circumvent this difficulty we exploit the decidability of $\GfourWK$:
if the decision procedure outputs $\vdash_{\GfourWK}\Gamma \Ra \Delta$ then we are done,
else we show that the output $\not\vdash_{\GfourWK}\Gamma \Ra \Delta$ and 
our assumption $\Gamma \vdash_{\WK} \Delta$ leads to a contradiction
(which does not require the construction of an explicit proof).
\end{proof}

\section{Uniform interpolation}
We start this section by introducing general definitions and facts about uniform interpolation.

\begin{definition} 
\label{def:UIP}
A modal logic~$L$ over the language $\lang$ has the \emph{uniform interpolation property} if, for every
$\lang$-formula~$\phi$  and variable~$p$, there exist 
$\lang$-formulas, denoted by $\forall p \phi$ and $\exists p \phi$,
satisfying the following three properties:
\begin{enumerate}
\item \emph{$p$-freeness:} \label{UIP:1} $ \pvf{\exists p \phi} \subseteq  \pvf{\phi} \setminus \{ p \}$ and $ \pvf{\forall p \phi} \subseteq  \pvf{\phi} \setminus \{ p \}$,
\item \emph{implication:} \label{UIP:2} $\vdash_L \phi \to \exists p \phi \text{ and } \vdash_L \forall p \phi \to  \phi,$ and
\item \emph{uniformity:} \label{UIP:3} for each formula~$\psi$ with  $p \notin \pvf{\psi}$: 
\begin{align*}
    \vdash_L \phi \to \psi \ &\text{ implies } \ \vdash_L \exists p \phi \to \psi,\\
    \vdash_L \psi \to \phi \ &\text{ implies } \ \vdash_L \psi \to \forall p \phi.
\end{align*}
\end{enumerate}
Formula $\exists p \phi$ is called the \emph{uniform post-interpolant} of $\phi$ w.r.t.~$p$ and $\forall p \phi$ the \emph{uniform pre-interpolant} of $\phi$ w.r.t.~$p$.
\end{definition}

The notation of $\exists p \phi$ and $\forall p \phi$ is suggestive: the formulas do not really contain quantifiers, but they are defined over the modal language $\lang$. This notation is justified by Pitts' result stating that uniform interpolants provide \emph{an interpretation} for propositional quantifiers of second order intuitionistic logic into the propositional language \cite{Pit92}.

\begin{remark}
It is interesting to mention that both in classical and intuitionistic based (modal) logics, the formulas $\forall p (\phi \to \psi)$ and $\exists p (\phi) \to \forall p (\phi \to \psi)$ are equivalent. This was shown in \cite[Lemma 1]{FerGieGooShi24} for modal logics based on $\Box$. Since the proof only relies on intuitionistic propositional reasoning, the same result holds in the intuitionistic modal setting with $\Box$ and $\Diam$. So in case uniform interpolants exist for $\CK$ (which we will show in this paper), we have $\vdash_\CK \forall p (\phi \to \psi)$ if and only if $\vdash_\CK \exists p (\phi) \to \forall p (\phi \to \psi)$. The analogous result holds for $\WK$.
\end{remark}

\begin{remark}
\label{remark:dual}
    In a classical setting, $\forall p \phi$ and $\exists p \phi$ are dual to each other, but that is not true when working in an intuitionistic logic like $\CK$. However, note that it is possible to define formulas of the form $\exists p \phi$ in terms of $\forall$ and $\to$ as follows using an extra propositional variable $q$ not free in $\phi$:
\[
\exists p \phi := \forall q (\forall p (\phi \to q) \to q).
\]
This is a folklore result when the quantifiers denote the quantifiers in second order intuitionistic propositional logic (see \cite{Pit92}). Here we write the quantifiers to denote uniform interpolants for which the equivalence also holds (\cite{Gie22}, Remark~2.2.7). 
This means that from a method constructing all formulas of the form $\forall p \phi$, one gets a definition of $\exists p \phi$. 
However, the known proof-theoretic method \cite{Pit92} for~$\IL$ constructs $\forall p \phi$ and $\exists p \phi$ via a mutual recursion. 
\end{remark}

We provide a proof-theoretic construction to prove the uniform interpolation property. 
In light of Remark~\ref{remark:dual} we use a sequent-style definition of uniform interpolation in which pre- and post-uniform interpolants are constructed simultaneously. 

\begin{definition}
\label{def SUIP}
A set of provable single-succedent 
sequents, denoted~$\vdash$, has the \emph{uniform interpolation property} if, for any single-succedent 
sequent $\Gamma \To \Delta$ and variable $p$, there exist modal formulas $\Ep{\Gamma}$ and $\Ap{\Gamma \To \Delta}$ such
that the following three properties hold:
\begin{enumerate}
    \item \emph{$p$-freeness:}
    \begin{enumerate}[(a)]
        \item \label{p-free_a}$\pvf{\Ep{\Gamma}} \subseteq \pvs{\Gamma} \setminus \{ p \}$, and
        \item \label{p-free_b} $\pvf{\Ap{\Gamma \To \Delta}} \subseteq \pvs{\Gamma, \Delta} \setminus \{ p \}$;
    \end{enumerate}
    \item \emph{implication:}
    \begin{enumerate}[(a)]
        \item \label{implication_a} $\vdash \Gamma \To \Ep{\Gamma}$, and
        \item \label{implication_b} $\vdash \Gamma, \Ap{\Gamma \To \Delta} \To \Delta$;
    \end{enumerate}
    \item \emph{uniformity:} for any finite multiset of formulas $\Pi$ such that $p \notin \pvs{\Pi}$, if it holds that
        $\vdash \Pi, \Gamma \To \Delta$, then it also holds that:  
        \begin{enumerate}
       \item \label{uniformity_a} $\vdash \Pi, \Ep{\Gamma} \To \Delta$ if  $p \notin \pvf{\Delta}$, and
       \item \label{uniformity_b} $\vdash \Pi, \Ep{\Gamma} \To \Ap{\Gamma \To \Delta}$.
       \end{enumerate}
\end{enumerate}
We say that a sequent calculus $\sf{S}$ has the \emph{uniform interpolation property} if $\vdash_{\sf{S}}$ has the uniform interpolation property.
\end{definition}

Note that in the above definition, we have $|\Delta| = 1$ for $\GfourCK$ and $|\Delta| \leq 1$ for~$\GfourWK$. In the latter case, Definition~\ref{def SUIP} is equivalent to the uniform interpolant properties from \cite{Iemhoff19b}. We have the following well-known fact.

\begin{lemma}
    Suppose sequent calculus $\sf{S}$ is sound and complete with respect to logic $L$. If $\sf{S}$ has the uniform interpolation property, then $L$ has the uniform interpolation property.
\end{lemma}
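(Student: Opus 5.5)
The plan is to define the logic-level interpolants directly from the sequent-level ones and check the three clauses of Definition~\ref{def:UIP} against the three clauses of Definition~\ref{def SUIP}. Fix a formula $\phi$ and a variable $p$. We set
\[
\exists p\phi := \Ep{\{\phi\}} \qquad\text{and}\qquad \forall p\phi := \Ap{\emptyset \To \phi},
\]
where $\emptyset\To\phi$ is a legitimate single-succedent sequent in both $\GfourCK$ and $\GfourWK$. Soundness and completeness (as in Theorem~\ref{thm:equiv}) give $\vdash_L \chi$ iff $\vdash_{\sf S}\ \Ra \chi$. By \impR\ and its invertibility (or simply by the soundness/completeness for sequents with nonempty antecedent), we also get $\vdash_{\sf S}\Gamma\Ra\chi$ iff $\vdash_L \bigwedge\Gamma\to\chi$. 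We use this translation throughout.

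$p$-freeness is immediate. Clause~(\ref{p-free_a}) gives $\pvf{\Ep{\{\phi\}}}\subseteq\pvf{\phi}\setminus\{p\}$. Clause~(\ref{p-free_b}) gives $\pvf{\Ap{\emptyset\To\phi}}\subseteq\pvf{\phi}\setminus\{p\}$.

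The implication clauses follow from clauses~(\ref{implication_a}) and~(\ref{implication_b}). First, $\vdash_{\sf S}\phi\Ra\Ep{\{\phi\}}$, so by \impR\ and soundness $\vdash_L\phi\to\exists p\phi$. Second, $\vdash_{\sf S}\Ap{\emptyset\To\phi}\Ra\phi$, which yields $\vdash_L\forall p\phi\to\phi$.

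For uniformity, let $p\notin\pvf{\psi}$.
\begin{itemize}
\item If $\vdash_L\phi\to\psi$, completeness gives $\vdash_{\sf S}\Ra\phi\to\psi$. Invertibility of \impR\ (obtainable via \cut\ with $\phi,\phi\to\psi\Ra\psi$, using \Id\ and \impL) then gives $\vdash_{\sf S}\phi\Ra\psi$. We apply clause~(\ref{uniformity_a}) with $\Pi=\emptyset$, $\Gamma=\{\phi\}$, $\Delta=\{\psi\}$; since $p\notin\pvf{\psi}$, this yields $\vdash_{\sf S}\Ep{\{\phi\}}\Ra\psi$, and soundness gives $\vdash_L\exists p\phi\to\psi$.
\item If $\vdash_L\psi\to\phi$, we similarly get $\vdash_{\sf S}\psi\Ra\phi$. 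We apply clause~(\ref{uniformity_b}) with $\Pi=\{\psi\}$, $\Gamma=\emptyset$, $\Delta=\{\phi\}$, which gives $\vdash_{\sf S}\psi,\Ep{\emptyset}\Ra\Ap{\emptyset\To\phi}$.
\end{itemize}

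The only real obstacle is the extra $\Ep{\emptyset}$ in the antecedent of the second case. We remove it using clause~(\ref{implication_a}) with $\Gamma=\emptyset$, which gives $\vdash_{\sf S}\ \Ra\Ep{\emptyset}$. Then \cut\ on $\Ep{\emptyset}$, whose admissibility is the cut-admissibility theorem above (together with \wk\ to match contexts), gives $\vdash_{\sf S}\psi\Ra\Ap{\emptyset\To\phi}$. Soundness then gives $\vdash_L\psi\to\forall p\phi$.

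For $\GfourWK$, one should also note that Theorem~\ref{thm:equiv} reads an empty succedent as $\bot$, but no empty succedents arise in this argument. This completes the proof.
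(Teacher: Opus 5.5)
Your proposal is correct and takes the paper's approach: it uses the same definitions $\exists p\phi := \Ep{\phi}$ and $\forall p\phi := \Ap{\emptyset \To \phi}$, and the paper's proof consists only of these definitions plus a pointer to Pitts and B\'{\i}lkov\'a. Your clause-by-clause check, including discharging the extra $\Ep{\emptyset}$ via implication (a) and cut, supplies the routine verification the paper omits; that step could equally be done in $L$ using soundness for arbitrary antecedents.
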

\begin{proof}
    The proof is well-known and can be found in \cite{Bil06,Pit92}. The essence is to define $\exists p \phi := \Ep{\phi}$ and $\forall p \phi := \Ap{\  \To \phi}$.
\end{proof}

\subsection{Uniform interpolation for $\CK$}

To prove the uniform interpolation property for $\CK$, we use the sequent calculus $\GfourCK$ to construct $\Ep{\Gamma}$ and $\Ap{s}$ for any multiset $\Gamma$ and any sequent $s$. Our construction adapts the one for $\iK$ \cite{Iemhoff19b} by adding extra cases for the $\Diam$. 

\begin{figure}[!pht]
\renewcommand*{\arraystretch}{1.15}
\scalebox{0.9}{
\begin{tabular}{| @{\ } l  @{\ }| @{\ \ \ } l @{\ \ \ } | @{\ \ \ }  l @{\ \ \ } |}
\hline
    & $\Gamma$ matches & $\callEp{\Gamma}$ contains\\
\hline
    $(\symEpCK0)$ & $\Gamma',\bot$ & $\bot$ \\
    $(\symEpCK1')$ & $\Gamma',q$ & $q$ \\
    $(\symEpCK1'')$ & $\Gamma',p$ & $\top$\\
    $(\symEpCK2)$ & $\Gamma',\psi_1 \wedge \psi_2$ & $\Ep{\Gamma', \psi_1, \psi_2}$ \\
    $(\symEpCK3)$ & $\Gamma',\psi_1 \vee \psi_2$ & $\Ep{\Gamma',\psi_1} \vee \Ep{\Gamma',\psi_2}$ \\
    $(\symEpCK4')$ & $\Gamma',q \to \psi$ with $q\notin \Gamma'$ & $q \to \Ep{\Gamma',\psi}$ \\
    $(\symEpCK4'')$ & $\Gamma',p \to \psi$ with $p\notin \Gamma'$ & $\top$ \\
    $(\symEpCK5')$ & $\Gamma',r, r \to \psi$ & $\Ep{\Gamma',r,\psi}$\\
    $(\symEpCK6)$ & $\Gamma',(\delta_1 \wedge \delta_2) \to \delta_3$ & $\Ep{\Gamma',\delta_1 \to (\delta_2 \to \delta_3)}$ \\
    $(\symEpCK7)$ & $\Gamma',(\delta_1 \vee \delta_2) \to \delta_3$ & $\Ep{\Gamma',\delta_1 \to \delta_3,\delta_2 \to \delta_3}$ \\    
    $(\symEpCK8')$ & $\Gamma',(\delta_1 \to \delta_2) \to \delta_3$ & \begin{tabular}{@{}r@{}}
    $[\Ep{\Gamma',\delta_2 \to \delta_3,\delta_1} \to \Ap{\Gamma',\delta_2 \to \delta_3,\delta_1 \To \delta_2}]$ \ \\
    $\to \Ep{\Gamma',\delta_3}$
    \end{tabular}
      \\
\hline  
    $(\symEpCK9)$ & $\Gamma \neq \emptyset$ & 
    $\Box \Ep{\xBox \Gamma}$ \\
    $(\symEpCK10)$ & $\Gamma',\Box \delta_1 \rightarrow \delta_2$ & 
    $\Box[\Ep{\xBox\Gamma'}\rightarrow  \Ap{\xBox\Gamma' \Ra \delta_1}]
     \rightarrow  \Ep{\Gamma' ,  \delta_2}$
    \\
\hline  
    $(\symEpCK11)$ & $\Gamma',\Diam\delta$ & 
    $\Diam \Ep{\xBox\Gamma' , \delta}$ \\
    $(\symEpCK12)$ & $\Gamma',\Diam\delta_3,\Diam \delta_1 \rightarrow \delta_2$ & 
    \begin{tabular}{@{}r@{}}
    $\Box[\Ep{\xBox\Gamma',\delta_3}\rightarrow  \Ap{\xBox\Gamma',\delta_3 \Ra \delta_1}]$\\
    $ \rightarrow  \Ep{\Gamma',\Diam\delta_3 ,  \delta_2}$
    \end{tabular} \\
    $(\symEpCK13)$ & $\Gamma',\Diam \delta_1 \rightarrow \delta_2$ & 
    $\Diam[\Ep{\xBox\Gamma'}\rightarrow  \Ap{\xBox\Gamma' \Ra \delta_1}]\rightarrow  \Ep{\Gamma',\delta_2}$
    \\
\hline 
\hline
    & $s$ matches & $\callAp{s}$ contains\\
\hline
    $(\symApCK1')$ & $\Gamma,q \To \phi$ & $\bot$\\
    $(\symApCK1'')$ & $\Gamma,p \To \phi$ with $\phi\neq p$ & $\bot$ \\
    $(\symApCK2)$ & $\Gamma,\psi_1 \wedge \psi_2 \To \phi$ & $\Ap{\Gamma, \psi_1, \psi_2 \To \phi}$ \\
    $(\symApCK3)$ & $\Gamma,\psi_1 \vee \psi_2 \To \phi$ & \begin{tabular}{@{}r@{}}
    $[\Ep{\Gamma,\psi_1} \to \Ap{\Gamma,\psi_1 \To \phi}] \wedge$\\
    $[\Ep{\Gamma,\psi_2} \to \Ap{\Gamma,\psi_2 \To \phi}]$
    \end{tabular}\\
    $(\symApCK4')$ & $\Gamma,q \to \psi \To \phi$ with $q \notin \Gamma$ & $q \wedge \Ap{\Gamma,\psi \To \phi}$ \\
    $(\symApCK4'')$ & $\Gamma,p \to \psi \To \phi$ with $p \notin \Gamma$ & $\bot$ \\
    $(\symApCK5')$ & $\Gamma,r, r \to \psi \To \phi$ & $\Ap{\Gamma,r,\psi \To \phi}$\\
    $(\symApCK6)$ & $\Gamma,(\delta_1 \wedge \delta_2) \to \delta_3 \To \phi$ & $\Ap{\Gamma,\delta_1 \to (\delta_2 \to \delta_3) \To \phi}$ \\
    $(\symApCK7)$ & $\Gamma,(\delta_1 \vee \delta_2) \to \delta_3 \To \phi$ & $\Ap{\Gamma,\delta_1 \to \delta_3,\delta_2 \to \delta_3 \To \phi}$ \\    
    $(\symApCK8')$ & $\Gamma,(\delta_1 \to \delta_2) \to \delta_3 \To \phi$ \ & 
    \begin{tabular}{@{}r@{}}
    $[\Ep{\Gamma,\delta_2 \to \delta_3,\delta_1} \to \Ap{\Gamma,\delta_2 \to \delta_3,\delta_1 \To \delta_2}]$\\
    $\wedge~\Ap{\Gamma,\delta_3 \To \phi}$
    \end{tabular}\\
    $(\symApCK9)$ & $\Gamma \To q$ & $q$\\ 
    $(\symApCK10)$ & $\Gamma, p \To p$ & $\top$\\   
    $(\symApCK11)$ & $\Gamma \To \phi_1 \wedge \phi_2$ & $\Ap{\Gamma \To \phi_1} \wedge \Ap{\Gamma \To \phi_2}$\\
    $(\symApCK12)$ & $\Gamma \To \phi_1 \vee \phi_2$ & $\Ap{\Gamma \To \phi_1} \vee \Ap{\Gamma \To \phi_2}$  \\    
    $(\symApCK13)$ & $\Gamma \To \phi_1 \to \phi_2$ & $\Ep{\Gamma,\phi_1} \to \Ap{\Gamma, \phi_1 \To \phi_2}$\\
\hline 
    $(\symApCK14)$ & $\Gamma \To \Box \delta$ & $\Box(\Ep{\xBox\Gamma} \rightarrow \Ap{\xBox\Gamma\Ra \delta})$\\ 
    $(\symApCK15)$ & $\Gamma , \Box\delta_1 \rightarrow \delta_2 \Ra \phi $ & 
    $\Box[\Ep{\xBox\Gamma}\rightarrow \Ap{\xBox\Gamma\Ra \delta_1}] \wedge \Ap{\Gamma , \delta_2 \Ra \phi}$\\
\hline
    $(\symApCK16)$ & $\Gamma, \Diam \delta_1\To \Diam \delta_2$ & $\Box(\Ep{\xBox\Gamma,\delta_1} \rightarrow \Ap{\xBox\Gamma,\delta_1\Ra \delta_2})$\\ 
    $(\symApCK17)$ & $\Gamma,\Diam\delta_3, \Diam\delta_1 \rightarrow \delta_2 \Ra \phi $ & 
    \begin{tabular}{@{}r@{}}
    $\Box[\Ep{\xBox\Gamma,\delta_3}\rightarrow \Ap{\xBox\Gamma,\delta_3\Ra \delta_1}]$\\
    $ \wedge \Ap{\Gamma,\Diam\delta_3, \delta_2 \Ra \phi}$
    \end{tabular}\\
    $(\symApCK18)$ & $\Gamma\To \Diam \delta$ & $\Diam(\Ep{\xBox\Gamma} \rightarrow \Ap{\xBox\Gamma\Ra \delta})$\\ 
    $(\symApCK19)$ & $\Gamma,\Diam\delta_1 \rightarrow \delta_2 \Ra \phi $ & 
    $\Diam[\Ep{\xBox\Gamma}\rightarrow \Ap{\xBox\Gamma\Ra \delta_1}]\wedge \Ap{\Gamma,\delta_2 \Ra \phi}$ \\
\hline
\end{tabular}
}
\caption{
Constructions of $\Ep{\Gamma}$ and $\Ap{\Gamma \To \phi}$ for $\GfourCK$ where in all clauses $q \neq p$. The constructions are divided into three parts: rows for propositional logic $\IL$ that
are a slight modification from~\cite{Pit92} (indicated here with $'$ or $''$), rows that cover the box modality as in \cite{Iemhoff19b} for box-only logic $\iK$, and rows that cover the diamond. 
}
\label{fig:UIP_IPC}
\end{figure}

\begin{definition}
\label{def_Ap_Ep}
We define sets $\callEp{\Gamma}$~\emph{(\rocqdoc{Sequent.G4CK.CKPropQuantifiers.html\#e_rule},\rocqdoc{Sequent.G4CK.CKPropQuantifiers.html\#e_rule_9},\rocqdoc{Sequent.G4CK.CKPropQuantifiers.html\#e_rule_12})}
and $\callAp{s}$~\emph{(\rocqdoc{Sequent.G4CK.CKPropQuantifiers.html\#a_rule_env},\rocqdoc{Sequent.G4CK.CKPropQuantifiers.html\#a_rule_env_form16},\rocqdoc{Sequent.G4CK.CKPropQuantifiers.html\#a_rule_env17},\rocqdoc{Sequent.G4CK.CKPropQuantifiers.html\#a_rule_form})}
based on a mutual recursion on the ordering $\prec$ according to the rows in Figure~\ref{fig:UIP_IPC}. We define \emph{(\rocqdoc{Sequent.G4CK.CKPropQuantifiers.html\#EA})}:\vspace{-0.5em}
\begin{align*}
    \Ep{\Gamma} := \bigwedge \callEp{\Gamma} \text{ and } \Ap{s} := \bigvee \callAp{s}.
\end{align*}
\end{definition}

\begin{theorem}
\label{thm:terminating}
    The algorithm defining $\Ep{\Gamma}$ and $\Ap{\Gamma \To \phi}$ is terminating.
\end{theorem}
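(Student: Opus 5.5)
We prove termination by showing that every recursive call in Figure~\ref{fig:UIP_IPC} is made on an argument strictly smaller than the caller in the well-founded order $\prec$, and then invoking well-founded induction. To compare the two kinds of arguments, identify a multiset $\Gamma$ (the argument of $\mathsf{E}_p$) with the sequent $\Gamma \Ra \emptyset$, so that it is measured by the multiset $\Gamma$ itself; the argument $\Gamma \Ra \phi$ of $\mathsf{A}_p$ is measured by $\Gamma \uplus \{\phi\}$. The simultaneous definition of $\Ep{\cdot}$ and $\Ap{\cdot}$ is then a single well-founded recursion on the Dershowitz--Manna order $\prec_m$. Moreover, $\callEp{\Gamma}$ and $\callAp{s}$ are finite: the number of ways to match a finite multiset against the finitely many row patterns is finite, so the big conjunctions and disjunctions are finite formulas.

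The main work is a row-by-row check that each call is smaller than its caller. This mirrors the observation, made just before the strong-termination proposition, that every premise of every rule of $\GfourCK$ is $\prec$-smaller than its conclusion, since every row corresponds to a rule or a rule premise.

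\begin{itemize}
\item \emph{Propositional rows.} Rows $(\symEpCK2)$, $(\symEpCK3)$, $(\symEpCK5')$--$(\symEpCK8')$ and their $\mathsf{A}$ counterparts replace one formula by finitely many formulas of smaller weight. For instance, $\delta_1\to(\delta_2\to\delta_3)$ is lighter than $(\delta_1\land\delta_2)\to\delta_3$ precisely because $w(\land)$ carries $+2$.
\item \emph{Row $(\symEpCK8')$.} The call $\Ep{\Gamma',\delta_2\to\delta_3,\delta_1}$ drops the succedent. It is still below $\Gamma',(\delta_1\to\delta_2)\to\delta_3$, since $w(\delta_2\to\delta_3)$ and $w(\delta_1)$ are both smaller than $w((\delta_1\to\delta_2)\to\delta_3)$.
\item \emph{Calls that add a succedent to an $\mathsf{E}$-context.} Calls such as $\Ap{\Gamma',\delta_2\to\delta_3,\delta_1 \Ra \delta_2}$ arising inside $\mathsf{E}$-rows are fine, because the added formula $\delta_2$ is also lighter than the removed formula.
\item \emph{$\mathsf{A}$-rows with a succedent.} In rows $(\symApCK11)$--$(\symApCK13)$ the succedent is replaced by lighter formulas. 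In $(\symApCK13)$ the formula $\phi_1\to\phi_2$ is replaced by $\phi_1$ and $\phi_2$, both in the sequent call; the call $\Ep{\Gamma,\phi_1}$ is smaller still.
\end{itemize}

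Modal rows use the fact that $\xBox\Gamma$ is $\prec_m$-below $\Gamma$ whenever $\Gamma$ contains no boxes that are not removed: each $\Box\psi$ is replaced by the lighter $\psi$, and all non-boxed formulas are deleted, which deletion only decreases the multiset.

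\begin{itemize}
\item \emph{Row $(\symEpCK9)$} is the delicate case. It is applied to every nonempty $\Gamma$, and $\xBox\Gamma$ is strictly smaller than $\Gamma$ exactly because $\Gamma\neq\emptyset$: either something is deleted or some box is stripped. This is why the side condition $\Gamma\neq\emptyset$ is needed.
\item \emph{Row $(\symEpCK10)$.} The call $\Ap{\xBox\Gamma'\Ra\delta_1}$ is below $\Gamma',\Box\delta_1\to\delta_2$, since $\delta_1$ is lighter than $\Box\delta_1\to\delta_2$ and $\xBox\Gamma'\preceq_m\Gamma'$. The call $\Ep{\Gamma',\delta_2}$ is clear.
\item \emph{Diamond rows.} In $(\symEpCK11)$, $\Diam\delta$ is replaced by $\delta$. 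Rows $(\symEpCK12)$, $(\symApCK17)$, $(\symEpCK13)$ and $(\symApCK19)$ replace $\Diam\delta_1\to\delta_2$ by $\delta_1$ and/or $\delta_2$, while the $\Diam\delta_3$ is either kept or opened to $\delta_3$.
\item \emph{Rows $(\symApCK14)$, $(\symApCK16)$ and $(\symApCK18)$.} Here the modal succedent is opened, giving a strictly smaller succedent, and the context shrinks weakly.
\end{itemize}

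I expect the main obstacle not to be any single inequality but the bookkeeping that makes the mutual recursion formally well-founded. In particular, one must identify $\mathsf{E}$-arguments with empty-succedent sequents so that $\mathsf{E}$ and $\mathsf{A}$ calls are compared in one order. One must also handle the rows that do not come from a rule, namely $(\symEpCK9)$ and the $\mathsf{A}$-calls spawned from $\mathsf{E}$-rows, where the succedent is created from a subformula of a removed antecedent formula. In Rocq I would package both functions as one function on a sum type (multiset, or multiset paired with a formula), defined by well-founded recursion on the pointed-environment order already used for strong termination. Each recursive call then carries a proof obligation discharged by a generic lemma: removing one formula and adding finitely many strictly lighter ones, possibly together with an application of $\xBox$, yields a $\prec_m$-smaller multiset.
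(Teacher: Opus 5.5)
Your proposal is correct and is essentially the paper's argument: a mutual well-founded recursion on $\prec$ in which $\mathsf{A}$-arguments $\Gamma\Ra\phi$ are measured by $\Gamma\uplus\{\phi\}$, every recursive call in Figure~\ref{fig:UIP_IPC} decreases, and the finitely many row matches make $\callEp{\Gamma}$ and $\callAp{s}$ finite. Your row-by-row check spells out what the paper asserts in one line. Row $(\symApCK15)$ is left implicit, but it goes exactly like $(\symEpCK10)$, and the only fact you need about $\xBox$ is that $\xBox\Gamma\preceq_m\Gamma$ always holds, strictly iff $\Gamma\neq\emptyset$.
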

\begin{proof}
     $\Ep{\Gamma}$ and $\Ap{\Gamma \To \phi}$ are defined by mutual induction on the ordering~$\prec$ where for $\Ap{\Gamma \To \phi}$ we look at the multiset $\Gamma,\phi$. Each recursive call in Figure~\ref{fig:UIP_IPC} reduces in that ordering. Moreover, for each multiset $\Gamma$ there are finitely many matches of rows $(\symEpCK0)\text{-}(\symEpCK13)$, and similarly so for sequent $\Gamma \To \phi$ and rows $(\symApCK1')\text{-}(\symApCK19)$. This means that $\callAp{\Gamma \To \phi}$ and $\callEp{\Gamma}$ are finite sets of formulas. So, the constructions of $\Ep{\Gamma}$ and $\Ap{\Gamma \To \phi}$ are terminating.
\end{proof}

Let us explain the construction of Figure~\ref{fig:UIP_IPC}. Rows $(\symEpCK0),(\symEpCK1'),(\symEpCK4')$ and $(\symApCK1'),(\symApCK4'),(\symApCK9)$ take care of boolean constants and variables other than $p$, which can be smartly taken out of the recursive call since these are atomic $p$-free formulas.
Rows $(\symEpCK1''),(\symEpCK4'')$ and rows $(\symApCK1''),(\symApCK4'')$
are not necessary for the construction, as $\bot$ is a neutral element for $\Ep{\cdot}$ and $\top$ is a neutral element for $\Ap{\cdot}$, but are required in Rocq to ensure the functionality of $\Ep{\cdot}$ and $\Ap{\cdot}$ by making them defined on all inputs.

Rows $(\symEpCK2),(\symEpCK3),(\symEpCK5')\text{-}(\symEpCK8'),(\symEpCK10),(\symEpCK12)$ and $(\symApCK2)$, $(\symApCK3)$, $(\symApCK5')\text{-}(\symApCK8')$, $(\symApCK11)\text{-}(\symApCK17)$ correspond to what we call a \emph{full} application of a calculus rule where the middle column presents the conclusion of the rule and the right column uses the premise of that rule in the recursive call. 

Rows $(\symEpCK9),(\symEpCK11),(\symEpCK13),(\symApCK18)$ and $(\symApCK19)$ correspond to what we call \emph{partial} applications of a rule from the calculus. When considering the \emph{uniformity} property of uniform interpolation in Definition~\ref{def SUIP}, a rule application might be possible due to extra formulas in the $p$-free context $\Pi$ outside the reach of the interpolant construction meaning that these rows in the table \textit{partially} match the form of rule from the calculus. For example, row $(\symEpCK11)$ is used in case there is a formula $\Diam \delta'_1 \to \delta_2'$ in the hidden $p$-free context that would together with $\Diam \delta \in \Gamma$ make an application of the rule $\impdiamL$ possible. Row $(\symEpCK9)$ is used for partial applications of $\BoxR$ and $\DiamL$. 

We will use the following simple facts.

\begin{lemma}
\label{Lemma_simple1}
\phantom{A}
\begin{enumerate}
    \item\label{bigwedgeE} Let $\Gamma$ be a multiset such that $\callEp{\Gamma} \neq \emptyset$. For any $\chi \in \callEp{\Gamma}$, multiset~$\Pi$ and formula $\phi$, if $\vdash_\GfourCK \Pi,\chi \To \phi$, then $\vdash_\GfourCK \Pi,\Ep{\Gamma} \To \phi$.%
    ~\emph{(\rocqdoc{Sequent.G4CK.CKPropQuantifiers.html\#E_left},\rocqdoc{Sequent.G4CK.CKPropQuantifiers.html\#E9_left},\rocqdoc{Sequent.G4CK.CKPropQuantifiers.html\#E12_left})}
    \item\label{bigveeA} Let $\Gamma \To \phi$ be a sequent such that $\callAp{\Gamma \To \phi}\neq\emptyset$. For any $\chi \in \callAp{\Gamma \To \phi}$ and multiset~$\Pi$, if $\vdash_\GfourCK \Pi \To \chi$, then $\vdash_\GfourCK \Pi \To \Ap{\Gamma \To \phi}$.%
    ~\emph{(\rocqdoc{Sequent.G4CK.CKPropQuantifiers.html\#A_right},\rocqdoc{Sequent.G4CK.CKPropQuantifiers.html\#Af_right},\rocqdoc{Sequent.G4CK.CKPropQuantifiers.html\#A16_right},\rocqdoc{Sequent.G4CK.CKPropQuantifiers.html\#A17_right})}
\end{enumerate}
\end{lemma}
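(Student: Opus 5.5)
Both items follow from the shape of the definitions $\Ep{\Gamma} := \bigwedge \callEp{\Gamma}$ and $\Ap{s} := \bigvee \callAp{s}$ together with the admissible rules already established for $\GfourCK$. The only point needing care is how the finite conjunction (resp.\ disjunction) is actually bracketed. We fix a list enumerating the finite set (Theorem~\ref{thm:terminating} guarantees finiteness) and argue by induction on its length. Nonemptiness is assumed precisely so that we never fall into the conventions $\bigwedge\emptyset=\top$ or $\bigvee\emptyset=\bot$.

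For item~\ref{bigwedgeE}, let $\callEp{\Gamma}=\{\chi_1,\dots,\chi_n\}$ with $n\geq 1$, so that $\Ep{\Gamma}=\chi_1\land(\chi_2\land\cdots)$ (or whatever fixed bracketing the definition uses). We show by induction on $n$ that for every $i$, from $\vdash_\GfourCK \Pi,\chi_i\To\phi$ we obtain $\vdash_\GfourCK \Pi,\bigwedge_{j}\chi_j\To\phi$.
\begin{itemize}
\item If $n=1$, the conjunction is $\chi_1$ itself and there is nothing to do.
\item Otherwise, write the conjunction as $\chi_1\land\Theta$, where $\Theta$ is the conjunction of the remaining conjuncts.
\begin{itemize}
\item If $\chi=\chi_1$, apply \wk{} to add $\Theta$ and then \andL{}.
\item If $\chi$ is among the remaining conjuncts, the induction hypothesis gives $\vdash \Pi,\Theta\To\phi$; then \wk{} adds $\chi_1$ and \andL{} finishes.
\end{itemize}
\end{itemize}
Only the rule \andL{} of Figure~\ref{fig:iseq-pc} and the admissible weakening of Lemma~\ref{lemma:weakening} are used.

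Item~\ref{bigveeA} is dual. Write $\Ap{\Gamma\To\phi}=\chi_1\lor\Theta$. If $\chi=\chi_1$, use \orRone{}. Otherwise, apply the induction hypothesis to obtain $\vdash\Pi\To\Theta$ and conclude with \orRtwo{}. No weakening is needed here.

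The only real obstacle is bookkeeping rather than logic. The paper defines $\callEp{\cdot}$ in several pieces (the separate Rocq lemmas for rows $(\symEpCK9)$ and $(\symEpCK12)$ suggest this), so the big conjunction may be assembled as a conjunction of sub-conjunctions. The argument above applies verbatim at each level of nesting: membership of $\chi$ in the overall set means membership in one of the pieces. Applying the one-level argument twice, first inside the relevant piece and then in the outer conjunction or disjunction, handles this uniformly.
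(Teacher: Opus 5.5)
Your proposal is correct and follows the paper's argument. Weakening together with \andL{} turns the premise into one with the full conjunction $\Ep{\Gamma}$ on the left, and \orRone{}/\orRtwo{} reach the full disjunction $\Ap{\Gamma\To\phi}$ on the right; your induction on an enumeration of the finite set just makes the paper's ``multiple applications'' precise, and your treatment of nested sub-conjunctions fits the separate Rocq lemmas for rows $(\symEpCK9)$ and $(\symEpCK12)$.
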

\begin{proof}
    \eqref{bigwedgeE} follows from the fact that $\Ep{\Gamma} = \bigwedge \callEp{\Gamma}$. So multiple applications of weakening and $\andL$ applied to $\Pi,\chi \To \phi$ gives a proof for $\Pi,\Ep{\Gamma} \To \phi$. For~\eqref{bigveeA}, observe that $\Ap{\Gamma\To\phi}=\bigvee\callAp{\Gamma\To\phi}$, so multiple applications of $\orRone$ and $\orRtwo$ applied to $\Pi \To \chi$ provide a proof for $\Pi \To \Ap{\Gamma \To \phi}$.
\end{proof}

\begin{lemma}[\rocqdoc{Sequent.G4CK.CKPropQuantifiers.html\#E9_left}]
\label{Lemma_simple2}
    If $\vdash_\GfourCK \Pi, \Box\Ep{\xBox\Gamma} \To \phi$, then $\vdash_\GfourCK \Pi, \Ep{\Gamma} \To \phi$.
\end{lemma}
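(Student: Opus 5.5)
The plan is to reduce the statement to Lemma~\ref{Lemma_simple1}\eqref{bigwedgeE}, applied with the formula $\chi := \Box\Ep{\xBox\Gamma}$. For this we need two things: that $\callEp{\Gamma}\neq\emptyset$, and that $\Box\Ep{\xBox\Gamma} \in \callEp{\Gamma}$.

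Row $(\symEpCK9)$ of Figure~\ref{fig:UIP_IPC} puts $\Box\Ep{\xBox\Gamma}$ into $\callEp{\Gamma}$ whenever $\Gamma\neq\emptyset$. So I split on whether $\Gamma$ is empty.

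\emph{Case $\Gamma\neq\emptyset$.} Row $(\symEpCK9)$ gives $\Box\Ep{\xBox\Gamma}\in\callEp{\Gamma}$, and in particular $\callEp{\Gamma}\neq\emptyset$. Lemma~\ref{Lemma_simple1}\eqref{bigwedgeE} then turns the assumed proof of $\Pi,\Box\Ep{\xBox\Gamma}\To\phi$ into a proof of $\Pi,\Ep{\Gamma}\To\phi$. Concretely, this means weakening in the remaining conjuncts of $\bigwedge\callEp{\Gamma}$ and then applying $\andL$ repeatedly.

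\emph{Case $\Gamma=\emptyset$.} No row of the $\mathsf{E}$ table matches, so $\callEp{\emptyset}=\emptyset$ and $\Ep{\emptyset}=\bigwedge\emptyset=\top$. Also $\xBox\emptyset=\emptyset$, so the hypothesis reads $\vdash\Pi,\Box\top\To\phi$, and the goal is $\vdash\Pi,\top\To\phi$.
\begin{itemize}
\item First, $\vdash\Pi,\top\To\Box\top$ holds: apply $\BoxR$ to reduce to $\xBox(\Pi,\top)\To\top$. Since $\top=\bot\to\bot$, this follows by $\impR$ and $\botL$.
\item Then the additive cut rule (with $\Box\top$ as cut formula), together with weakening of the hypothesis by $\top$, yields $\Pi,\top\To\phi$.
\end{itemize}
Alternatively, one can avoid cut by an inductive argument showing that $\Box\top$ on the left is invertibly removable. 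However, cut admissibility is already available, so I would use it.

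The main obstacle is only bookkeeping. One must check that the conventions make $\Ep{\emptyset}$ equal to $\top$, rather than undefined, in the Rocq definition. One must also check that the definition of $\callEp{\Gamma}$ genuinely includes row $(\symEpCK9)$ for every nonempty $\Gamma$, regardless of which other rows match. Once these facts are confirmed, the proof is immediate.
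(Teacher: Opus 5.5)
Your proof is correct and follows the paper's argument: the same split on whether $\Gamma$ is empty, with row $(\symEpCK9)$ and Lemma~\ref{Lemma_simple1} handling $\Gamma\neq\emptyset$. In the empty case the paper just writes $\Box\top=\top$ and appeals to weakening; your cut on $\Box\top$, with $\Box\top$ derived by $\BoxR$, $\impR$ and $\botL$, is the precise justification of that step.
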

\begin{proof}
   If $\Gamma = \emptyset$, then $\xBox\Gamma = \emptyset$ and so $\Box\Ep{\xBox\Gamma}=\Box\top=\top$. In that case $\vdash_\GfourCK \Pi,\top \To \phi$, and so $\vdash_\GfourCK \Pi,\Ep{\Gamma} \To \phi$ by weakening. If $\Gamma \neq \emptyset$ we know by $(\symEpCK9)$ that $\Box\Ep{\xBox\Gamma} \in \callEp{\Gamma}$. So by Lemma~\ref{Lemma_simple1} we conclude that $\vdash_\GfourCK \Pi, \Ep{\Gamma} \To \phi$. 
\end{proof}

We now present the main theorem of this section.
Due to space restrictions we highlight some of the important steps of the proof below. 

\begin{theorem}
\label{thm:UIP_CK}
    Sequent calculus $\GfourCK$ has the uniform interpolation property.
\end{theorem}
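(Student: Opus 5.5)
We prove the three properties of Definition~\ref{def SUIP} separately for the formulas of Definition~\ref{def_Ap_Ep}. Each proof is a simultaneous (mutual) induction on the well-founded order $\prec$, with the $\Ep{\cdot}$ statement indexed by the multiset $\Gamma$ and the $\Ap{\cdot}$ statement indexed by $\Gamma\uplus\{\phi\}$. Since every recursive call in Figure~\ref{fig:UIP_IPC} is on a $\prec$-smaller argument (Theorem~\ref{thm:terminating}), the induction hypothesis is always available for the subformulas $\Ep{\cdot}$ and $\Ap{\cdot}$ occurring in a row.

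\textbf{$p$-freeness.} This is immediate by inspecting each row. The only atoms that are introduced explicitly are $q\neq p$, $\bot$ and $\top$. Every recursive call uses only formulas built from subformulas of $\Gamma$, or of $\Gamma,\phi$.

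\textbf{Implication.} For (\ref{implication_a}) we show $\vdash\Gamma\To\chi$ for each $\chi\in\callEp{\Gamma}$ and then combine with $\andR$. For (\ref{implication_b}) we show $\vdash\Gamma,\chi\To\phi$ for each $\chi\in\callAp{\Gamma\To\phi}$ and combine with $\orL$. Each row is verified by applying the rule it mirrors, then the induction hypothesis, weakening (Lemma~\ref{lemma:weakening}) and $\Id$.
\begin{itemize}
\item For $(\symEpCK10)$, we apply $\impR$ and then $\impboxL$ on $\Box\delta_1\to\delta_2$. The left premise $\xBox\Gamma',\Ep{\xBox\Gamma'}\to\Ap{\xBox\Gamma'\Ra\delta_1}\Ra\delta_1$ follows from $\impL$ together with both induction hypotheses for $\xBox\Gamma'$.
\item For $(\symEpCK11)$, we use $\DiamL$: from $\xBox\Gamma',\delta\Ra\Ep{\xBox\Gamma',\delta}$ we obtain $\Gamma',\Diam\delta\Ra\Diam\Ep{\cdots}$.
\item For $(\symEpCK13)$, we apply $\impR$. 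The antecedent then contains $\Diam\chi\to\psi$ (with $\chi$ the bracketed implication), and we apply $\impdiamL$ to it. We still need a diamond formula in the context. I expect to need to supply it via $\DiamL$ with the trivially provable side of the bracketed formula, i.e.\ first showing $\Gamma'\Ra\Diam\chi$ is \emph{not} needed and instead using $\BoxR$-style reasoning. I will check this row carefully in the formalisation.
\item The $\Ap{}$ rows $(\symApCK16)$ and $(\symApCK18)$ use $\DiamL$ and $\impdiamL$ analogously.
\end{itemize}

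\textbf{Uniformity.} We prove (\ref{uniformity_a}) and (\ref{uniformity_b}) together by induction on $\prec$ applied to $\Gamma,\Delta$, with a secondary induction on the height of the given proof of $\Pi,\Gamma\To\Delta$. We case-split on the last rule of that proof and on whether its principal formula lies in $\Pi$ or in $\Gamma$.
\begin{itemize}
\item \emph{Principal formula in $\Gamma$.} The rule matches a full row. The secondary induction hypothesis on the premises, combined with Lemma~\ref{Lemma_simple1}, gives the claim.
\item \emph{Principal formula in $\Pi$, non-modal rule.} We apply the same rule to $\Pi,\Ep{\Gamma}$ and use the induction hypothesis.
\item \emph{Principal formula in $\Pi$, modal rule.} Here the modal rule consumes $\xBox\Gamma$, and possibly a $\Diam\delta\in\Gamma$, together with $p$-free parts of $\Pi$. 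This is where the partial rows enter.
\begin{itemize}
\item For $\BoxR$ with the succedent in $\Pi$'s vocabulary, the induction hypothesis gives $\xBox\Pi,\Ep{\xBox\Gamma}\To\delta$. $\BoxR$ then gives $\Pi,\Box\Ep{\xBox\Gamma}\To\Box\delta$, and Lemma~\ref{Lemma_simple2} transfers this to $\Ep{\Gamma}$.
\item $\DiamL$ with the diamond formula $\Diam\delta\in\Gamma$ uses $(\symEpCK11)$ and $\DiamL$ again.
\item $\impdiamL$ with the implication in $\Pi$ and $\Diam\delta_3\in\Gamma$ also uses $(\symEpCK11)$.
\item With the implication in $\Gamma$ and the diamond in $\Pi$, we use $(\symEpCK13)$ and $(\symApCK19)$. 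Here the induction hypothesis (\ref{uniformity_b}) yields $\xBox\Pi,\delta',\Ep{\xBox\Gamma'}\To\Ap{\xBox\Gamma'\Ra\delta_1}$. After $\impR$ and $\DiamL$ this gives $\Pi,\Diam\delta'\To\Diam[\Ep{}\to\Ap{}]$, which discharges the implication row.
\end{itemize}
\end{itemize}
For (\ref{uniformity_b}) with the succedent being $p$-containing, the right rules match the $\Ap{}$ rows $(\symApCK11)$--$(\symApCK18)$ in the same way, with $(\symApCK18)$ covering $\DiamL$ with the diamond in $\Pi$.

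The main obstacle will be the modal cases of uniformity where the premise of $\DiamL$ or $\impdiamL$ mixes $\xBox\Pi$, $\xBox\Gamma$ and a diamond formula coming from the other side. We must pick the right partial row and show that the contexts line up after applying the induction hypothesis to $\xBox\Pi,\xBox\Gamma,\ldots$. The delicate point is checking that the induction measure decreases there, and that $\Ep{\xBox\Gamma}$ correctly absorbs the boxed part through Lemma~\ref{Lemma_simple2}.
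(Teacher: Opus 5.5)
Your proposal has two genuine gaps.

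\emph{First gap: the implication property for row $(\symEpCK13)$.} This property is not established, and the route you sketch would not work. After $\impR$ the antecedent is $\Gamma',\Diam\delta_1\to\delta_2,\Diam\chi$ with $\chi=\Ep{\xBox\Gamma'}\to\Ap{\xBox\Gamma'\To\delta_1}$. It contains no formula of the form $\Diam\chi\to\psi$. No auxiliary $\DiamL$ or $\BoxR$ step can manufacture a diamond: for $\Gamma'=\emptyset$ the sequent $\Diam\delta_1\to\delta_2\To\Diam\chi$ has no proof.

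The missing observation is that the formula $\Diam\chi$ that $\impR$ has just moved to the left \emph{is} the side diamond $\Diam\gamma$ of $\impdiamL$, applied with principal formula $\Diam\delta_1\to\delta_2$. The left premise $\xBox\Gamma',\chi\To\delta_1$ follows by $\impL$ from the induction hypotheses $\xBox\Gamma'\To\Ep{\xBox\Gamma'}$ and $\xBox\Gamma',\Ap{\xBox\Gamma'\To\delta_1}\To\delta_1$. The right premise $\Gamma',\Diam\chi,\delta_2\To\Ep{\Gamma',\delta_2}$ is the induction hypothesis plus weakening. This is exactly why the partial row is sound: its antecedent $\Diam\chi$ stands in for the diamond that, in the uniformity proof, sits in the hidden context $\Pi$.

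The same step is needed for $(\symApCK19)$, which your list omits, as it also omits $(\symApCK17)$. There $\Diam\chi$ comes out of the disjunct by $\andL$, and the left premise of $\impdiamL$ must additionally absorb $\xBox\Ap{\Gamma',\delta_2\To\phi}$ by weakening. By contrast, rows $(\symApCK16)$ and $(\symApCK18)$ need only $\DiamL$, not $\impdiamL$.

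\emph{Second gap: the induction measure for uniformity.} You induct primarily on $\Gamma\uplus\Delta$ and secondarily on proof height, and this does not supply the hypotheses you need. Suppose the principal formula of $\impimpL$, $\impboxL$ or $\impdiamL$ is an implication lying in $\Pi$. Then the left premise carries a new $p$-free succedent taken from $\Pi$: $\delta_1\to\delta_2$ for $\impimpL$, or $\delta_1$ for the other two. You must apply uniformity (a) to that premise, for instance to turn $\xBox\Pi_0,\xBox\Gamma\To\delta_1$ into $\xBox\Pi_0,\Ep{\xBox\Gamma}\To\delta_1$. That succedent may outweigh every formula in $\Gamma,\phi$, so $\xBox\Gamma\uplus\{\delta_1\}$ (or $\Gamma\uplus\{\delta_1\to\delta_2\}$) need not be $\prec_m$-below $\Gamma\uplus\{\phi\}$. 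The lexicographic hypothesis is therefore unavailable.

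The paper avoids this by inducting on the whole sequent $\Pi,\Gamma\To\phi$ under $\prec$. Then every premise of every rule is smaller, which is just the termination property. The principal formula in $\Pi$ dominates the new succedent, and no induction on height is needed.
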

\begin{proof}
    We take the construction of $\Ap{\cdot}$ and $\Ep{\cdot}$ from Definition~\ref{def_Ap_Ep}. We have to show all properties from Definition~\ref{def SUIP}. The \emph{$p$-freeness} property follows easily by examining the construction of $\Ap{\cdot}$ and $\Ep{\cdot}$~(\rocqdoc{Sequent.G4CK.CKPropQuantifiers.html\#EA_vars}). 

    The \emph{implication} properties \eqref{implication_a} and \eqref{implication_b} are simultaneously proved by the multiset ordering $\prec$ on $\Gamma,\phi$, i.e., we prove \eqref{implication_a} $\vdash_\GfourCK \Gamma \To \Ep{\Gamma}$, and \eqref{implication_b} $\vdash_\GfourCK \Gamma, \Ap{\Gamma \To \phi} \To \phi$.
    The implementation~(\rocqdoc{Sequent.G4CK.CKPropQuantifiers.html\#entail_correct}) is an elegant extension of the existing implementation for $\IL$ \cite{Fer23}. For \eqref{implication_a}, we have $\Ep{\Gamma} := \bigwedge \callEp{\Gamma}$ given in Figure~\ref{fig:UIP_IPC}, so it is sufficient to prove every conjunct in $\callEp{\Gamma}$. Let us here treat two cases for $\Diam$.\vspace{0.5em}\\ 
\noindent \emph{Case $(\symEpCK9)$}:\\ In this case $\Gamma \neq \emptyset$, which means that $\xBox \Gamma \prec \Gamma$, so the induction hypothesis~(IH) applies: \vspace{-0.5em}
    \begin{center}
    \small
        \begin{prooftree}      \hypo{\scriptstyle{\text{(IH)}}} %
        \infer[no rule]1[]{\xBox \Gamma \To \Ep{\xBox \Gamma}} %
        \infer1[$\scriptstyle\BoxR$]{\Gamma \To \Box \Ep{\xBox \Gamma}}
    \end{prooftree}
    \end{center}
    \vspace{-0.5em}
    \noindent \emph{Case $(\symEpCK13)$}:\\ Let us write $\chi := \Ep{\xBox \Gamma'}\to\Ap{\xBox\Gamma'\To\delta_1}$. We have the following derivation in which we use the rules $\impL$ and $\wk$ shown admissible in Lemma~\ref{lemma:weakening}.\vspace{-0.5em}
    \begin{center}
    \scalebox{0.85}{
        \begin{prooftree}
        \hypo{\scriptstyle{\text{(IH)}}} %
        \infer[no rule]1[]{\xBox\Gamma' \To \Ep{\xBox\Gamma'}} %
        \hypo{\scriptstyle{\text{(IH)}}} %
        \infer[no rule]1[]{\xBox\Gamma',\Ap{\xBox\Gamma'\To\delta_1} \To \delta_1} %
        \infer2[$\scriptstyle\impL$]{\xBox\Gamma',\chi \To \delta_1}
        \hypo{\scriptstyle{\text{(IH)}}} %
        \infer[no rule]1[]{\Gamma',\delta_2 \To \Ep{\Gamma', \delta_2}} %
        \infer1[$\scriptstyle\wk$]{\Gamma', \delta_2, \Diam\chi \To \Ep{\Gamma',\delta_2}} %
        \infer2[$\scriptstyle\impdiamL$]{\Gamma', \Diam\delta_1\to\delta_2,\Diam\chi \To \Ep{\Gamma',\delta_2}} %
        \infer1[$\scriptstyle\impR$]{\Gamma', \Diam\delta_1\to\delta_2 \To \Diam\chi\to\Ep{\Gamma',\delta_2}} %
    \end{prooftree}
    }
    \vspace{0.2em}
    \end{center}

For \eqref{implication_b} we prove that $\vdash_\GfourCK \Gamma, \Ap{\Gamma \To \phi} \To \phi$. We have $\Ap{\Gamma\To\phi} := \bigvee \callAp{\Gamma\To\phi}$ given in Figure~\ref{fig:UIP_IPC}, so it is sufficient to prove every disjunct in $\Ap{\Gamma\To\phi}$ placed on the left of the sequent. 
As an example, we treat the following case.  \vspace{-0.5em}\\

    \noindent \emph{Case $(\symApCK19)$}:\\ Let us write $\chi := \Ep{\xBox \Gamma'}\to\Ap{\xBox\Gamma'\To\delta_1}$. Note that in the following derivation, $\xBox\Ap{\Gamma',\delta_2\To\phi}$ represents a formula when $\Ap{\Gamma',\delta_2\To\phi}$ is a boxed formula and $\emptyset$ when $\Ap{\Gamma',\delta_2\To\phi}$ is not a boxed formula. In both cases, our weakening rule from Lemma~\ref{lemma:weakening} works.
    \begin{center}
    \scalebox{0.8}{ 
        \begin{prooftree}
        \hypo{\scriptstyle{\text{(IH)}}} %
        \infer[no rule]1[]{\xBox\Gamma' \To \Ep{\xBox\Gamma'}} %
        \hypo{\scriptstyle{\text{(IH)}}} %
        \infer[no rule]1[]{\xBox\Gamma',\Ap{\xBox\Gamma'\To\delta_1} \To \delta_1} %
        \infer2[$\scriptstyle\impL$]
        {\xBox\Gamma',\chi \To \delta_1} %
        \infer1[$\scriptstyle\wk$]{\xBox\Gamma',\chi,\xBox\Ap{\Gamma',\delta_2\To\phi} \To \delta_1} %
        \hypo{\scriptstyle{\text{(IH)}}} %
        \infer[no rule]1[]{\Gamma',\delta_2,\Ap{\Gamma',\delta_2\To\phi} \To \phi} %
        \infer1[$\scriptstyle\wk$]{\Gamma',\delta_2,\Diam\chi,\Ap{\Gamma',\delta_2\To\phi} \To \phi} %
        \infer2[$\scriptstyle\impdiamL$]{\Gamma',\Diam\delta_1\to\delta_2,\Diam\chi,\Ap{\Gamma',\delta_2\To\phi} \To \phi} %
        \infer1[$\scriptstyle\andL$]{\Gamma',\Diam\delta_1\to\delta_2,\Diam\chi\wedge\Ap{\Gamma',\delta_2\To\phi} \To \phi} %
    \end{prooftree}
    }
    \vspace{0.5em}
    \end{center}
    
    Now we turn to the \emph{uniformity} property of uniform interpolation~(\rocqdoc{Sequent.G4CK.CKPropQuantifiers.html\#pq_correct}). Let~$\Pi$ be a finite multiset such that $p \notin \pvs{\Pi}$ and suppose $\vdash_\GfourCK \Pi, \Gamma \To \phi$ by proof~$\pi$. By induction on $\Pi, \Gamma \To \phi$ via $\prec$ 
    we simultaneously prove that \eqref{uniformity_a} $\vdash \Pi, \Ep{\Gamma} \To \phi$ if $p \notin \pvf{\phi}$ and \eqref{uniformity_b} $\vdash_\GfourCK \Pi,\Ep{\Gamma} \To \Ap{\Gamma\To\phi}$. In the sequel we use primes to denote that a formula is $p$-free, so for \eqref{uniformity_a} we write $\phi = \phi'$. We look at the last rule applied in proof~$\pi$.
    Here we only discuss the rule $\DiamL$. \vspace{1em}\\
    \noindent \emph{Case $\DiamL$}:\\ \emph{Subcase}: The principal formula is in $p$-free $\Pi$, i.e., $\Pi=\Pi_0,\Diam\delta'_1$.
    \begin{enumerate}[(a)]
        \item The conclusion of the rule is of the form $\Pi_0, \Diam \delta'_1,\Gamma \Ra \Diam \phi_0'$ with $p$-free formula $\phi_0'$. By induction we have $\vdash_\GfourCK \xBox \Pi_0, \delta'_1,\Ep{\xBox\Gamma} \Ra \phi'_0$. We obtain the desired result with the following derivation:
        \begin{center}
        \small
        \begin{prooftree}
        \hypo{\scriptstyle{\text{(IH)}}}
        \infer[no rule]1{\xBox\Pi_0,\delta'_1,\Ep{\xBox\Gamma} \To \phi'_0} %
        \infer1[$\scriptstyle\DiamL$]{\Pi_0,\Diam\delta'_1,\Box\Ep{\xBox\Gamma}\To\Diam\phi'_0}
        \infer[double]1[Lemma~\ref{Lemma_simple2}]{\Pi_0,\Diam\delta'_1,\Ep{\Gamma} \To \Diam\phi_0'}
        \end{prooftree}
        \end{center}
          
        \item The conclusion of the rule is of the form $\Pi_0, \Diam \delta'_1,\Gamma \Ra \Diam \phi$. By induction, we get $\vdash_\GfourCK \xBox \Pi_0, \delta'_1,\Ep{\xBox\Gamma} \Ra \Ap{\xBox\Gamma \Ra \phi}$. We have the following:

        \begin{center}
        \small
        \begin{prooftree}
        \hypo{\scriptstyle{\text{(IH)}}}
        \infer[no rule]1{\xBox\Pi_0,\delta'_1,\Ep{\xBox\Gamma} \To \Ap{\xBox\Gamma\To\phi}} %
        \infer1[$\scriptstyle\impR$]{\xBox\Pi_0,\delta'_1 \To \Ep{\xBox\Gamma}\to\Ap{\xBox\Gamma\To\phi}} %
        \infer1[$\scriptstyle\DiamL$]{\Pi_0,\Diam\delta'_1 \To \Diam[\Ep{\xBox\Gamma}\to\Ap{\xBox\Gamma\To\phi}]} %
        \infer1[$\scriptstyle\wk$]{\Pi_0,\Diam\delta'_1,\Ep{\Gamma} \To \Diam[\Ep{\xBox\Gamma}\to\Ap{\xBox\Gamma\To\phi}]} %
        \infer[double]1[\small{$(\symApCK18)$, Lemma~\ref{Lemma_simple1}}]{\Pi_0,\Diam\delta'_1,\Ep{\Gamma} \To \Ap{\Gamma\To\Diam\phi}}
        \end{prooftree}
        \end{center}
    \end{enumerate}
    \emph{Subcase}: The principal formula is in $\Gamma$, i.e., $\Gamma = \Gamma_0,\Diam\delta_1$.
    \begin{enumerate}[(a)]
        \item This case proceeds in a similar way as the previous subcase (a), where now, instead of Lemma~\ref{Lemma_simple2} we use Lemma~\ref{Lemma_simple1} with $(\symEpCK11)$.
        
         \item This case proceeds in a similar way as the previous subcase (b), where, instead of $(\symApCK18)$ we use $(\symApCK16)$ and rule $\BoxR$ instead of $\DiamL$.
    \end{enumerate}

We have checked the properties of $p$-freeness, implication, and uniformity for sequent calculus $\GfourCK$, concluding that $\GfourCK$ has uniform interpolation.
\end{proof}

\begin{corollary}[\rocqdoc{Sequent.G4CK.CKPropQuantifiers.html\#CK_uniform_interpolation}]
    Logic $\CK$ has the uniform interpolation property.
\end{corollary}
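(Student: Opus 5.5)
The corollary follows by chaining three results already available. First, Theorem~\ref{thm:UIP_CK} gives that the calculus $\GfourCK$ has the uniform interpolation property of Definition~\ref{def SUIP}, via the formulas $\Ep{\Gamma}$ and $\Ap{\Gamma\To\phi}$ of Definition~\ref{def_Ap_Ep}. Second, Theorem~\ref{thm:equiv} gives that $\GfourCK$ is sound and complete for $\CK$: $\vdash_{\GfourCK}\Gamma\Ra\phi$ iff $\Gamma\vdash_{\CK}\phi$. Third, the transfer lemma stated before Figure~\ref{fig:UIP_IPC} turns these two facts into the uniform interpolation property for $\CK$ in the sense of Definition~\ref{def:UIP}.

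To make the transfer explicit, I will set $\exists p\phi:=\Ep{\phi}$ and $\forall p\phi:=\Ap{\,\To\phi}$ and check the three clauses of Definition~\ref{def:UIP}.

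\emph{$p$-freeness.} This is immediate from clauses \eqref{p-free_a} and \eqref{p-free_b} of Definition~\ref{def SUIP}.

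\emph{Implication.} Clause \eqref{implication_a} gives $\vdash_{\GfourCK}\phi\To\Ep{\phi}$. Clause \eqref{implication_b} with $\Gamma=\emptyset$ gives $\vdash_{\GfourCK}\Ap{\,\To\phi}\To\phi$. Applying $\impR$ and then Theorem~\ref{thm:equiv} yields $\vdash_{\CK}\phi\to\exists p\phi$ and $\vdash_{\CK}\forall p\phi\to\phi$.

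\emph{Uniformity.} Let $\psi$ be $p$-free.
\begin{itemize}
\item Suppose $\vdash_{\CK}\phi\to\psi$. By Theorem~\ref{thm:equiv}, $\vdash_{\GfourCK}\,\To\phi\to\psi$. Cut with the identity sequent $\phi,\phi\to\psi\To\psi$, or use invertibility of $\impR$ obtained through cut admissibility, to get $\vdash_{\GfourCK}\phi\To\psi$. Now apply clause \eqref{uniformity_a} with $\Pi=\emptyset$ and $\Gamma=\{\phi\}$ to obtain $\Ep{\phi}\To\psi$.
\item Suppose $\vdash_{\CK}\psi\to\phi$. In the same way we obtain $\vdash_{\GfourCK}\psi\To\phi$. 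Apply clause \eqref{uniformity_b} with $\Pi=\{\psi\}$ and $\Gamma=\emptyset$. This gives $\psi,\Ep{\emptyset}\To\Ap{\,\To\phi}$. Since $\Ep{\emptyset}=\bigwedge\emptyset=\top$ (no row of Figure~\ref{fig:UIP_IPC} matches the empty multiset), a cut on the provable $\top$ removes it and leaves $\psi\To\forall p\phi$.
\end{itemize}
In both cases, $\impR$ followed by Theorem~\ref{thm:equiv} returns the required statement in $\CK$.

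The argument is otherwise routine bookkeeping. The only delicate points are the back-and-forth between a provable implication and the corresponding sequent, and discharging the trivial $\Ep{\emptyset}$. Both are handled by the admissibility of cut and of weakening (Lemma~\ref{lemma:weakening}).
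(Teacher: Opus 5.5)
Your proposal is correct and follows the paper's route: the corollary comes from combining Theorem~\ref{thm:UIP_CK} with the soundness and completeness of Theorem~\ref{thm:equiv} through the transfer lemma, taking $\exists p\phi:=\Ep{\phi}$ and $\forall p\phi:=\Ap{\,\To\phi}$, and you have simply written out the bookkeeping that the paper cites from the literature. As a side remark, you can discharge $\Ep{\emptyset}$ without inspecting Figure~\ref{fig:UIP_IPC}, since clause \eqref{implication_a} with $\Gamma=\emptyset$ already gives $\vdash_{\GfourCK}\;\To\Ep{\emptyset}$, which is all the cut needs.
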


\subsection{Uniform interpolation for $\WK$}

The construction of uniform interpolants for $\WK$ only requires two modifications compared to the construction of uniform interpolants for $\CK$ in Figure~\ref{fig:UIP_IPC}, because the two calculi $\GfourCK$ and $\GfourWK$ only differ in two aspects. First, since sequents $\Gamma \Ra \Delta$ in $\GfourWK$ obey $|\Delta|\leq 1$ instead of $|\Delta|=1$, we should read any row in Figure~\ref{fig:UIP_IPC} that matches a sequent~$s$ with $\phi$ in the succedent, as having $\Delta$ in the succedent with $|\Delta|\leq 1$. So we replace all occurrences of $\phi$ (without a subscript) in Figure~\ref{fig:UIP_IPC} with $\Delta$. To be precise, this applies to rows $(\symApCK1')\text{-}(\symApCK8'), (\symApCK15), (\symApCK17)$, and $(\symApCK19)$.

Second, the main difference between the calculi $\GfourCK$ and $\GfourWK$ lies in the left rule for $\Diam$ which in $\GfourWK$ has the form:
\[
\infer[\scriptstyle\DiamLW]{\Gamma,\Diam\phi \Ra \Delta}{
    \xBox\Gamma,\phi \Ra \xDiam\Delta}
\]
That is why we modify row $(\symApCK16)$ from Figure~\ref{fig:UIP_IPC} to the following row~(\rocqdoc{Sequent.G4WK.WKPropQuantifiers.html\#a_rule_env}):
\begin{center}
\renewcommand*{\arraystretch}{1.5}
\begin{tabular}{| l | l | l |}
\hline
 $\ (\symApWK16)$ \  & \ $\Gamma, \Diam \delta \Ra \Delta$ \ \qquad \quad & \ $\Box (\Ep{\xBox\Gamma,\delta} \to \Ap{\xBox\Gamma, \delta \Ra \xDiam \Delta})$ \quad \ \\
\hline
\end{tabular}
\end{center}

Now, $\Ep{\Gamma}$ and $\Ap{s}$ with respect to $\GfourWK$ are defined as in Definition~\ref{def_Ap_Ep} with the modified table as described above. The analogues of Theorem~\ref{thm:terminating} and Lemmas~\ref{Lemma_simple1} and~\ref{Lemma_simple2} hold for $\GfourWK$ by the same proofs.

\begin{theorem}
\label{thm:UIP_WK}
    Sequent calculus $\GfourWK$ has the uniform interpolation property.
\end{theorem}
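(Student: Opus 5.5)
We follow the same strategy as for Theorem~\ref{thm:UIP_CK}, reusing every case not involving the diamond-left rule verbatim, with $\phi$ read as a succedent $\Delta$ of cardinality at most one. \emph{$p$-freeness} is again immediate by inspecting the modified table. The only new row is $(\symApWK16)$; its recursive calls are on $\xBox\Gamma,\delta$ and $\xDiam\Delta$, which only contain variables of $\Gamma,\Diam\delta,\Delta$.

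For \emph{implication}, we argue by simultaneous induction on $\prec$ over $\Gamma\uplus\Delta$. Property \eqref{implication_a} is unchanged, except that $(\symEpCK9)$ now serves $\BoxR$ and $\DiamLW$. For \eqref{implication_b}, the empty-succedent rows are handled like their singleton counterparts. The new case is $(\symApWK16)$. Put $\chi:=\Ep{\xBox\Gamma,\delta}\to\Ap{\xBox\Gamma,\delta\Ra\xDiam\Delta}$; we must show $\Gamma,\Diam\delta,\Box\chi\Ra\Delta$. Apply $\DiamLW$ backwards to reach $\xBox\Gamma,\delta,\chi\Ra\xDiam\Delta$; note that $\xBox(\Box\chi)=\chi$. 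This sequent follows by $\impL$ from the two induction hypotheses $\xBox\Gamma,\delta\Ra\Ep{\xBox\Gamma,\delta}$ and $\xBox\Gamma,\delta,\Ap{\xBox\Gamma,\delta\Ra\xDiam\Delta}\Ra\xDiam\Delta$. The induction applies because $\xBox\Gamma,\delta,\xDiam\Delta\prec\Gamma,\Diam\delta,\Delta$.

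For \emph{uniformity}, we induct on $\Pi,\Gamma\Ra\Delta$ via $\prec$ and split on the last rule of the proof. All rules shared with $\GfourCK$ are handled as before, where the $\Delta=\emptyset$ instances behave like the $\phi$ instances. The new case is $\DiamLW$, which has two subcases.

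\emph{Principal formula in $\Pi$}, i.e.\ $\Pi=\Pi_0,\Diam\delta'$ with premise $\xBox\Pi_0,\delta',\xBox\Gamma\Ra\xDiam\Delta$.
\begin{itemize}
\item For (a), $\Delta$ is $p$-free, hence so is $\xDiam\Delta$. The induction hypothesis gives $\xBox\Pi_0,\delta',\Ep{\xBox\Gamma}\Ra\xDiam\Delta$. Since $\xBox(\Box\Ep{\xBox\Gamma})=\Ep{\xBox\Gamma}$, applying $\DiamLW$ and then Lemma~\ref{Lemma_simple2} finishes.
\item For (b), the induction hypothesis gives $\xBox\Pi_0,\delta',\Ep{\xBox\Gamma}\Ra\Ap{\xBox\Gamma\Ra\xDiam\Delta}$. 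We need $\Pi_0,\Diam\delta',\Ep{\Gamma}\Ra\Ap{\Gamma\Ra\Delta}$. If $\Delta=\{\Diam\delta_0\}$, the argument is exactly as in $\GfourCK$: apply $\impR$ and $\DiamLW$, where $\DiamLW$ with succedent $\Diam(\ldots)$ acts like $\DiamL$, and then use $(\symApCK18)$.
\end{itemize}
If $\Delta$ is empty or not a diamond, then $\xDiam\Delta=\emptyset$ and the premise is $\xBox\Pi_0,\delta',\xBox\Gamma\Ra{}$. This is the main obstacle: no diamond-shaped $\Ap{}$ row is available. Our first attempt is to observe that the $\Ap{}$ target is then irrelevant. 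From (a) applied with empty succedent (trivially $p$-free) we get $\Pi_0,\Diam\delta',\Ep{\Gamma}\Ra{}$. For $\Delta=\emptyset$ this is literally what we need. For a singleton $\Delta$, right weakening (a generalised weakening, Lemma~\ref{lemma:weakening}) gives $\Pi_0,\Diam\delta',\Ep{\Gamma}\Ra\Ap{\Gamma\Ra\Delta}$. So the induction hypothesis (a) is invoked on the smaller premise with the $p$-free empty succedent.

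\emph{Principal formula in $\Gamma$}, i.e.\ $\Gamma=\Gamma_0,\Diam\delta$ with premise $\xBox\Pi,\xBox\Gamma_0,\delta\Ra\xDiam\Delta$.
\begin{itemize}
\item For (a), the induction hypothesis gives $\xBox\Pi,\Ep{\xBox\Gamma_0,\delta}\Ra\xDiam\Delta$. Applying $\DiamLW$ with principal formula $\Diam\Ep{\xBox\Gamma_0,\delta}$ (row $(\symEpCK11)$) and then Lemma~\ref{Lemma_simple1} finishes. Here we rely on $\xBox(\Diam\ldots)=\emptyset$, and we must check that $\DiamLW$ discards non-boxed formulas.
\item For (b), the induction hypothesis gives $\xBox\Pi,\Ep{\xBox\Gamma_0,\delta}\Ra\Ap{\xBox\Gamma_0,\delta\Ra\xDiam\Delta}$. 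We apply $\impR$, then $\BoxR$ to obtain $\Pi\Ra\Box(\ldots)$, then weakening, and finally row $(\symApWK16)$ with Lemma~\ref{Lemma_simple1}. This works uniformly for both empty and nonempty $\Delta$, which is exactly why $(\symApWK16)$ was generalised.
\end{itemize}

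Finally, rows for $\impdiamL$ need no change, since that rule is common to both calculi. Checking these cases completes the plan.
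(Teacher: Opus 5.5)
Your proposal follows the paper's proof essentially step for step. You handle the new row $(\symApWK16)$ in implication (b) via $\DiamLW$ and $\impL$, just as the paper does. You also split the $\DiamLW$ case of uniformity into the same two subcases. When the principal formula is in $\Pi$, you use the induction hypothesis, $\DiamLW$ and Lemma~\ref{Lemma_simple2}. When it is in $\Gamma$, you use $(\symEpCK11)$, or $(\symApWK16)$ with $\BoxR$; the latter correctly needs no case split on $\Delta$.

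There is, however, a slip in the principal-in-$\Pi$ subcase (b) when $\xDiam\Delta=\emptyset$. Uniformity (b) always has the single formula $\Ap{\Gamma\Ra\Delta}$ as its succedent, even when $\Delta=\emptyset$. So $\Pi_0,\Diam\delta',\Ep{\Gamma}\Ra{}$ is not ``literally what we need'' for $\Delta=\emptyset$; that sequent is the (a) statement. You need the same weakening step you use for singleton $\Delta$.

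That step is also misattributed. Lemma~\ref{lemma:weakening} only gives left weakening. Weakening an empty succedent to $\{\theta\}$ is admissible in $\GfourWK$, either by an easy induction on derivations or via Theorem~\ref{thm:equiv}, but you must state and justify it. In the induction, the only non-trivial case is $\DiamLW$ with $\theta=\Diam\theta_0$, where you apply the induction hypothesis to the premise.

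The paper avoids right weakening altogether. Only row $(\symApCK18)$ produces $\Diam$-formulas, and it requires $\Delta=\{\Diam\psi\}$, so $\Ap{\Gamma\Ra\Delta}$ is not a $\Diam$-formula here. Hence $\DiamLW$ can be applied directly with conclusion $\Pi_0,\Diam\delta',\Box\Ep{\xBox\Gamma}\Ra\Ap{\Gamma\Ra\Delta}$. Its premise is exactly the empty-succedent induction hypothesis $\xBox\Pi_0,\delta',\Ep{\xBox\Gamma}\Ra{}$, and Lemma~\ref{Lemma_simple2} finishes. Either fix covers $\Delta=\emptyset$ and non-diamond singletons at once.
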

\begin{proof}
    (\rocqdoc{Sequent.G4WK.WKPropQuantifiers.html\#EA_vars},\rocqdoc{Sequent.G4WK.WKPropQuantifiers.html\#entail_correct},\rocqdoc{Sequent.G4WK.WKPropQuantifiers.html\#pq_correct}) Compared to the proof of uniform interpolation for $\GfourCK$ (Theorem~\ref{thm:UIP_CK}), a few modifications are needed, mostly in those places where $(\symApCK16)$ plays a role. The proof of the \textit{implication} property \eqref{implication_a} does not change, and for \eqref{implication_b} the case for $(\symApWK16)$ works via the same proof as for $(\symApCK16)$ in Theorem~\ref{thm:UIP_CK}, where we replace each occurrence of $\Diam \delta_2$ by $\Delta$ and each $\delta_2$ by $\xDiam\Delta$. For the \textit{uniformity} property, we check the case for $\DiamLW$ which comes with two subcases. They show similarities to the proof for $\DiamL$ of Theorem~\ref{thm:UIP_CK}:\vspace{0.6em}\\
    \textit{Case} $\DiamLW$:\\
    \textit{Subcase}: The principal formula is in the $p$-free context $\Pi$, i.e., $\Pi = \Pi_0, \Diam\delta'$.
    \begin{enumerate}[(a)]
        \item The conclusion of the rule is of the form $\Pi_0, \Diam\delta',\Gamma \To \Delta'$ where $\Delta'$ is $p$-free. By induction we have $\vdash_\GfourWK \xBox \Pi_0,\delta',\Ep{\xBox\Gamma} \To \xDiam \Delta'$. We have:
        \begin{center}
        \small
        \begin{prooftree}
        \hypo{\scriptstyle{\text{(IH)}}}
        \infer[no rule]1{\xBox\Pi_0,\delta',\Ep{\xBox\Gamma} \To \xDiam\Delta'} %
        \infer1[$\scriptstyle\DiamLW$]{\Pi_0,\Diam\delta',\Box\Ep{\xBox\Gamma}\To\Delta'}
        \infer[double]1[Lemma~\ref{Lemma_simple2}]{\Pi_0,\Diam\delta',\Ep{\Gamma} \To \Delta'}
        \end{prooftree}
        \end{center}
        \item The conclusion of the rule is of the form $\Pi_0, \Diam\delta',\Gamma \To \Delta$. 
        We now consider two possibilities: $\Delta = \{\Diam\psi\}$ for some $\psi$ or not. If not, then $\xDiam\Delta = \emptyset$ and $\Ap{\Gamma \To \Delta}$ is not a $\Diam$-formula by inspection of its construction. This means that we can reason as follows:
        \begin{center}
        \small
        \begin{prooftree}
        \hypo{\scriptstyle{\text{(IH)}}}
        \infer[no rule]1{\xBox\Pi_0,\delta',\Ep{\xBox\Gamma} \To } %
        \infer1[$\scriptstyle\DiamLW$]{\Pi_0,\Diam\delta',\Box\Ep{\xBox\Gamma}\To\Ap{\Gamma \To \Delta}}
        \infer[double]1[Lemma~\ref{Lemma_simple2}]{\Pi_0,\Diam\delta',\Ep{\Gamma} \To \Ap{\Gamma \To \Delta}}
        \end{prooftree}
        \end{center}
        If $\Delta = \{\Diam\psi\}$, then the same derivation applies as case (b) for $\DiamL$ in the proof of Theorem~\ref{thm:UIP_CK}.
    \end{enumerate}
    \textit{Subcase}: The principal formula is in $\Gamma$, i.e., $\Gamma = \Gamma_0, \Diam\delta_1$. This subcase is analogous to the previous subcase where instead of Lemma~\ref{Lemma_simple2} we use Lemma~\ref{Lemma_simple1} with $(\symEpWK11)$ and instead of using $(\symApWK18)$ we apply $(\symApWK16)$.
\end{proof}

\begin{corollary}[\rocqdoc{Sequent.G4WK.WKPropQuantifiers.html\#WK_uniform_interpolation}]
\label{thm:UIP_WK}
    Logic $\WK$ has the uniform interpolation property.
\end{corollary}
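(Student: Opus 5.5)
The statement is the corollary that $\WK$ has the uniform interpolation property. It follows by combining three results already in the paper. The plan is to apply the general transfer lemma (a sound and complete sequent calculus with the uniform interpolation property yields a logic with it), instantiated with $\sf{S}=\GfourWK$ and $L=\WK$. The inputs are:
\begin{itemize}
\item soundness and completeness of $\GfourWK$ with respect to $\WKH$ (Theorem~\ref{thm:equiv});
\item the uniform interpolation property of $\GfourWK$ just established (Theorem~\ref{thm:UIP_WK}).
\end{itemize}

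Concretely, for a formula $\phi$ and variable $p$, I will set $\exists p\phi := \Ep{\phi}$ and $\forall p\phi := \Ap{\ \To\phi}$, where $\mathsf{E}_p$ and $\mathsf{A}_p$ are computed with the modified table for $\GfourWK$. I then check the three clauses of Definition~\ref{def:UIP}.

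\emph{$p$-freeness} is immediate from clause (1) of Definition~\ref{def SUIP}.

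\emph{Implication.} Clause (2a) gives $\vdash_{\GfourWK}\phi\To\Ep{\phi}$, and clause (2b) gives $\vdash_{\GfourWK}\Ap{\To\phi}\To\phi$. Applying $\impR$ and then Theorem~\ref{thm:equiv} yields $\vdash_\WK\phi\to\exists p\phi$ and $\vdash_\WK\forall p\phi\to\phi$.

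\emph{Uniformity, post-interpolant.} Assume $\vdash_\WK\phi\to\psi$ with $p\notin\pvf{\psi}$. By Theorem~\ref{thm:equiv}, $\vdash_{\GfourWK}\ \To\phi\to\psi$. To get $\vdash_{\GfourWK}\phi\To\psi$, I take the cut with $\phi\To\phi$ (from $\Id$) and use $\impL$; the admissible cut handles this. Now apply clause (3a) with $\Pi=\emptyset$, $\Gamma=\{\phi\}$, $\Delta=\{\psi\}$. This gives $\vdash\Ep{\phi}\To\psi$, and $\impR$ plus soundness finishes.

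\emph{Uniformity, pre-interpolant.} Assume $\vdash_\WK\psi\to\phi$. As before, obtain $\vdash_{\GfourWK}\psi\To\phi$. Apply clause (3b) with $\Pi=\{\psi\}$, $\Gamma=\emptyset$. This gives $\vdash\psi,\Ep{\emptyset}\To\Ap{\To\phi}$. Since $\Ep{\emptyset}$ is an empty conjunction, it is $\top$, and a cut on $\top$ removes it.

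The main obstacle is bookkeeping about the succedent. In $\GfourWK$ a succedent may be empty, and Theorem~\ref{thm:equiv} is read with the convention that an empty succedent means $\bot$. The sequents I use all have singleton succedents, so the convention never intervenes, and the argument is a direct instance of the general lemma.
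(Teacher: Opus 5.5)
Your proposal is correct and follows the paper's route. The paper obtains the corollary by combining Theorem~\ref{thm:equiv} with the uniform interpolation property of $\GfourWK$ through the general transfer lemma, setting $\exists p\phi := \Ep{\phi}$ and $\forall p\phi := \Ap{\ \To\phi}$, and your clause-by-clause check simply unpacks that lemma; the cuts you use to pass from $\To\phi\to\psi$ to $\phi\To\psi$ and to discard $\Ep{\emptyset}=\top$ are fine. The one thing to state explicitly is that in the pre-interpolant case $\psi$ is also assumed $p$-free, since that is what allows you to take $\Pi=\{\psi\}$ in clause (3b).
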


\section{Conclusion}
We have established the uniform interpolation property for constructive modal logics $\CK$ and $\WK$, the first positive result of this kind among intuitionistic modal logics with independent $\Box$ and $\Diam$. Our proof is \textit{constructive} where we extend Pitts' proof technique to constructive $\Diam$ by exploiting terminating sequent calculi for $\CK$ and $\WK$. Our results have been formalised in Rocq which could contribute to the uniform interpolation calculator available in \cite{Fer23,FerGieGooShi24}.

Our construction aims at showing that 
uniform interpolants \textit{can} be constructed.
Our approach is based on Pitts' naive construction, which can be simplified and optimised,
as shown for the case of $\IL$ in~\cite{FerGooIgl25}.
Such simplifications, optimisations and hence more efficient algorithms, would be welcome in modal logics.
A recent overview on complexity bounds on (uniform) interpolants in classical modal logics is provided in~\cite{tCate_Kuijer_Wolter25preprint}.

Another interesting direction for future research is to study the definability of so-called \textit{bisimulation quantifiers}. These are interpreted over relational semantics and closely linked to uniform interpolants \cite{Vis1996,dAgostino07}. It would be interesting to see whether these are also definable in the birelational models of intuitionistic modal logics. Birelational semantics have been developed for $\CK$ \cite{MenPai05} and $\WK$ \cite{Wijesekera90} and a general study of such semantics is provided and formalised in Rocq in \cite{GroShiClo25}.

Finally, Pitts' technique received attention in the recent field of \textit{universal proof theory} in which the uniform interpolation is linked to nicely behaved proof systems by inspecting the form of inference rules. So-called negative results have been established in the realm of intermediate logics stating that many of such logics cannot have a \textit{semi-analytic} sequent calculus \cite{Iemhoff19b}. Modal logics are also treated in this context \cite{Iem19a,AkbarTabatabai_Jalali25preprint,AkbarTabatabai_Jalali25}, but the negative results are in our opinion quite restrictive in the modal setting because only a few $\Box$-only rules are treated and a general format of `modal rule' is not provided. We hope to provide a first step to widen the scope of universal proof theory by studying both $\Box$ and $\Diam$.

\begin{credits}
\subsubsection{\ackname} The first author acknowledges the support by the Dutch Research Council (NWO) under the project \textit{Finding Interpolants: Proofs in Action} with file number VI.Veni.232.369 of the research programme Veni.
The second author has been supported by
(1) a UKRI Future Leaders Fellowship, ‘Structure vs Invariants in Proofs’, project reference MR/S035540/1,
and
(2) the Renaissance Philanthropy grant DEEPER.
\end{credits}

\bibliographystyle{splncs04}
\bibliography{mybib}

@PREAMBLE{ {\providecommand{\noopsort}[1]{}} }

@InProceedings{dAgostino07,
author="{\noopsort{Agostino}}{{D}'{A}gostino}, Giovanna",
editor="ten Cate, Balder
and Zeevat, Henk W.",
title="Uniform Interpolation, Bisimulation Quantifiers, and Fixed Points",
booktitle="Logic, Language, and Computation, TbiLLC 2005",
year="2007",
publisher="Springer Berlin Heidelberg",
address="Berlin, Heidelberg",
pages="96--116",
isbn="978-3-540-75144-1",
doi="10.1007/978-3-540-75144-1\_8",
series = "LNCS",
volume = "4363"
}

@misc{AkbarTabatabai_Jalali25preprint,
      title={Universal Proof Theory: Semi-analytic Rules and Uniform Interpolation}, 
      author={Akbar Tabatabai, Amirhossein and Jalali, Raheleh},
      year={2025},
      eprint={1808.06258},
      archivePrefix={arXiv},
      primaryClass={cs.LO},
      url={https://arxiv.org/abs/1808.06258}, 
}

@article{AkbarTabatabai_Jalali25,
title = {{Universal proof theory: Semi-analytic rules and Craig interpolation}},
journal = {Annals of Pure and Applied Logic},
volume = {176},
number = {1},
pages = {103509},
year = {2025},
issn = {0168-0072},
doi = {https://doi.org/10.1016/j.apal.2024.103509},
author = {Akbar Tabatabai, Amirhossein and Jalali, Raheleh}
}

@article{AkbarTabatabai_Iemhoff_Jalali24,
    author = {Akbar Tabatabai, Amirhossein and Iemhoff, Rosalie and Jalali, Raheleh},
    title = {{Uniform Lyndon interpolation for basic non-normal modal and conditional logics}},
    journal = {Journal of Logic and Computation},
    volume = {35},
    number = {6},
    pages = {exae057},
    year = {2024},
    month = {11},
    issn = {0955-792X},
    doi = {10.1093/logcom/exae057}
}

@article{AliDerOno2014, 
title={Uniform interpolation in substructural logics}, 
volume={7}, 
DOI={10.1017/S175502031400015X}, 
number={3}, 
journal={The Review of Symbolic Logic}, 
author={Alizadeh, Majid and Derakhshan, Farzaneh and Ono, Hiroakira},
year={2014}, 
pages={455–483}
}

@inproceedings{AleMenPaiRit01,
  author    = {Alechina, N. and
               Mendler, M. and
               de Paiva, V. and
               Ritter, E.},
  title     = {{Categorical and Kripke semantics for Constructive S4 modal logic}},
  booktitle = {Proceedings {CSL} 2001},
  editor    = {Fribourg, L.},
  OPTseries = {Lecture Notes in Computer Science},
  OPTvolume = {2142}, 
  publisher = {Springer},
  address   = {Berlin, Heidelberg},
  doi       = {10.1007/3-540-44802-0_21},
  year      = {2001},
}

@article{ArisakaDasStrassburger15,
author = {Arisaka, Ryuta and Das, Anupam and Stra{\ss}burger, Lutz},
year = {2015},
number = {3},
title = {On Nested Sequents for Constructive Modal Logics},
volume = {11},
journal = {Logical Methods in Computer Science},
doi = {10.2168/LMCS-11(3:7)2015}
}

@InProceedings{Balbiani_et_al24a,
  author =	{Balbiani, Philippe and Gao, Han and Gencer, {\c{C}}i{\u{g}}dem and Olivetti, Nicola},
  title =	{{A natural intuitionistic modal logic: Axiomatization and bi-nested calculus}},
  booktitle =	{32nd EACSL Annual Conference on Computer Science Logic (CSL 2024)},
  pages =	{13:1--13:21},
  series =	{Leibniz International Proceedings in Informatics (LIPIcs)},
  ISBN =	{978-3-95977-310-2},
  ISSN =	{1868-8969},
  year =	{2024},
  volume =	{288},
  editor =	{Murano, Aniello and Silva, Alexandra},
  publisher =	{Schloss Dagstuhl -- Leibniz-Zentrum f{\"u}r Informatik},
  address =	{Dagstuhl, Germany},
  doi =		{10.4230/LIPIcs.CSL.2024.13}
}

@InProceedings{Balbiani_et_all24b,
author="Balbiani, Philippe
and Gao, Han
and Gencer, {\c{C}}i{\u{g}}dem
and Olivetti, Nicola",
editor="Benzm{\"u}ller, Christoph
and Heule, Marijn J.H.
and Schmidt, Renate A.",
title="Local Intuitionistic Modal Logics and Their Calculi",
booktitle="Automated Reasoning",
year="2024",
publisher="Springer Nature Switzerland",
address="Cham",
pages="78--96"
}

@inproceedings{BeldePRit01,
  title={Extended {C}urry-{H}oward correspondence for a basic constructive modal logic},
  author={Bellin, G. and de Paiva, V. and Ritter, E.},
  booktitle={Methods for Modalities 2 (M4M-2)},
  year={2001}
}

@Article{BiePai00,
  author    = {Bierman, G. M. and
               de Paiva, V.},
  title     = {On an Intuitionistic Modal Logic},
  journal   = {Studia Logica},
  year      = {2000},
  volume    = {65},
  number    = {3},
  pages     = {383-416},
  doi       = {10.1023/A:1005291931660},
  OPTurl    = {https://doi.org/10.1023/A:1005291931660}
}

@phdthesis{Bil06,
author={Marta Bílková},
  title   = {{Interpolation in Modal Logics}},
  school  = {Univerzita Karlova},
  address = {Prague},
  year    = {2006}	
}

@misc{Bil22,
      title={Uniform Interpolation in provability logics}, 
      author={Marta Bílková},
      year={2022},
      eprint={2211.02591},
      archivePrefix={arXiv},
      primaryClass={math.LO},
      url={https://arxiv.org/pdf/2211.02591.pdf}
}

@book{BookCraigInterpolation,
	editor = {Balder {\noopsort{Cate}}{Ten Cate} and Jean Christoph Jung and Patrick Koopmann and Christoph Wernhard and Frank Wolter},
	Publisher = {Ubiquity Press},
	Title = {{Theory and Applications of Craig Interpolation}},
	Year = {To appear in spring 2026}
    }

@article{Bull65,
author = "Bull, R. A.",
doi = "10.1305/ndjfl/1093958154",
fjournal = "Notre Dame Journal of Formal Logic",
journal = "Notre Dame Journal of Formal Logic",
number = "2",
pages = "142--146",
publisher = "Duke University Press",
title = "A modal extension of intuitionistic logic.",
volume = "6",
year = "1965"
}

@article{Bull66,
author = "Bull, R. A.",
fjournal = "Journal of Symbolic Logic",
journal = "The Journal of Symbolic Logic",
number = "4",
pages = "609--616",
publisher = "Association for Symbolic Logic",
title = "{MIPC} as the Formalisation of an Intuitionistic Concept of Modality",
volume = "31",
year = "1966",
doi = "10.2307/2269696"
}

@misc{tCate_Kuijer_Wolter25preprint,
      title={The Size of Interpolants in Modal Logics}, 
      author={Balder {\noopsort{Cate}}{Ten Cate} and Louwe Kuijer and Frank Wolter},
      year={2025},
      eprint={2511.04577},
      archivePrefix={arXiv},
      primaryClass={cs.LO},
      url={https://arxiv.org/abs/2511.04577}, 
}

@InProceedings{DalGreOli21,
author="Dalmonte, Tiziano
and Grellois, Charles
and Olivetti, Nicola",
editor="Das, Anupam
and Negri, Sara",
title= {{Terminating calculi and countermodels for constructive modal logics}},
booktitle="Automated Reasoning with Analytic Tableaux and Related Methods",
year="2021",
publisher="Springer International Publishing",
address="Cham",
pages="391--408"
}

@misc{DasGroShi25-blog,
  author    = {Das, Anupam and
               Groot, {Jim de} and
               Shillito, Ian},
  title     = {Diamond-free parts of intuitionistic modal logics},
  year      = {2025},
  note      = {Post on The Proof Theory Blog, accessed: 13/02/2026},
  url       = {https://prooftheory.blog/2025/10/03/diamond-free-parts-of-intuitionistic-modal-logics/},
}

@misc{DasMar22-blog,
  author    = {Das, Anupam and
               Marin, Sonia},
  title     = {{B}rouwer meets {K}ripke: constructivising modal logic},
  year      = {2022},
  note      = {Post on The Proof Theory Blog, accessed: 13/02/2026},
  url       = {https://prooftheory.blog/2022/08/19/brouwer-meets-kripke-constructivising-modal-logic/},
}

@inproceedings{DasMar23,
  author       = {Anupam Das and
                  Sonia Marin},
  editor       = {R. Ramanayake and
                  J. Urban},
  title        = {On Intuitionistic Diamonds (and Lack Thereof)},
  booktitle    = {Proceedings~{TABLEAUX} 2023},
  OPTseries       = {Lecture Notes in Computer Science},
  OPTvolume       = {14278},
  pages        = {283--301},
  year         = {2023},
  doi          = {10.1007/978-3-031-43513-3\_16},
}

@article{DerMan,
author = {Dershowitz, Nachum and Manna, Zohar},
title = {Proving Termination with Multiset Orderings},
year = {1979},
volume = {22},
number = {8},
issn = {0001-0782},
doi = {10.1145/359138.359142},
journal = {Communications of the ACM},
pages = {465–476},
}

@article{Dyc92,
    author={Roy Dyckhoff},
    year={1992},
    title={Contraction-free sequent calculi for intuitionistic logic}, 
    journal={Journal of Symbolic Logic},
    volume={57}, 
    number={3},  
    pages={795–807},
doi={doi.org/10.2307/2275431}
}

@article{DycNeg00,
 author = {Roy Dyckhoff and Sara Negri},
 journal = {The Journal of Symbolic Logic},
 number = {4},
 pages = {1499--1518},
 publisher = {[Association for Symbolic Logic, Cambridge University Press]},
 title = {Admissibility of Structural Rules for Contraction-Free Systems of Intuitionistic Logic},
 volume = {65},
 year = {2000},
  url          = {https://doi.org/10.2307/2695061}
}

@article{Ewald1986, 
title={Intuitionistic tense and modal logic}, 
volume={51}, 
DOI={10.2307/2273953}, 
number={1}, 
journal={The Journal of Symbolic Logic}, 
publisher={Cambridge University Press}, 
author={Ewald, W. B.}, 
year={1986}, 
pages={166–179}
}

@inproceedings{Fer23,
author = {F\'{e}r\'{e}e, Hugo and van Gool, Sam},
title = {Formalizing and Computing Propositional Quantifiers},
year = {2023},
isbn = {9798400700262},
publisher = {Association for Computing Machinery},
doi = {10.1145/3573105.3575668},
booktitle = {Proceedings of the 12th ACM SIGPLAN International Conference on Certified Programs and Proofs},
pages = {148–158},
numpages = {11},
location = {Boston, MA, USA},
series = {CPP 2023}
}

@inproceedings{FerGieGooShi24,
title = {{Mechanised uniform interpolation for modal logics K, GL, and iSL}},
author = "Hugo F{\'e}r{\'e}e and {\noopsort{Giessen}}{van der Giessen}, Iris and {\noopsort{Gool}}{van Gool}, Sam and Ian Shillito",
year = "2024",
month = jul,
day = "2",
doi = "10.1007/978-3-031-63501-4\_3",
language = "English",
isbn = "9783031635007",
volume = "2",
series = "Lecture Notes in Computer Science",
publisher = "Springer",
pages = "43--60",
editor = "Christoph Benzm{\"u}ller and Heule, \{Marijn J. H.\} and Schmidt, \{Renate A.\}",
booktitle = "Automated Reasoning",
}

@inproceedings{FerGooIgl25,
  TITLE = {{Formulas rewritten and normalized computationally, and intuitionistically simplified}},
  AUTHOR = {F{\'e}r{\'e}e, Hugo and van Gool, Sam and Iglesias V{\'a}zquez, Yago},
  URL = {https://hal.science/hal-04859429},
  BOOKTITLE = {{36es Journ{\'e}es Francophones des Langages Applicatifs (JFLA 2025)}},
  ADDRESS = {Roiff{\'e}, France},
  YEAR = {2025},
  MONTH = Jan,
  HAL_ID = {hal-04859429},
  HAL_VERSION = {v1},
}

@article{FischerServi77,
	title = {On Modal Logic with an Intuitionistic Base},
	author = {Gis\`{e}le {\noopsort{Fischer}}{Fischer Servi}},
	pages = {141--149},
	doi = {10.1007/bf02121259},
	publisher = {Springer},
	year = {1977},
	volume = {36},
	journal = {Studia Logica}
}

@article{FischerServi84,
	title = {Axiomatizations for some intuitionistic modal logics},
	author = {Gis\`{e}le {\noopsort{Fischer}}{Fischer Servi}},
	pages = {179--194},
	year = {1984},
	volume = {42},
    issue ={3},
	journal = {Rendiconti del Seminario Matematico dell’ Universit`a Politecnica di
Torino}
}

@phdthesis{Gie22,
    author="{\noopsort{Giessen}}{Van der Giessen}, Iris",
    title={Uniform Interpolation and Admissible Rules. Proof-theoretic investigations into (intuitionistic) modal logics},
    series={Quaestiones Infinitae},
    volume={138},
school={Utrecht University},
    year={2022},
url={https://dspace.library.uu.nl/bitstream/handle/1874/423244/proefschrift%20-%206343c2623d6ab.pdf},
}

@inproceedings{GorRamShi21,
  author    = {Rajeev Gor{\'{e}} and
               Revantha Ramanayake and
               Ian Shillito},
  editor    = {Anupam Das and
               Sara Negri},
  title     = {Cut-Elimination for Provability Logic by Terminating Proof-Search:
               Formalised and Deconstructed Using {C}oq},
  booktitle = {Automated Reasoning with Analytic Tableaux and Related Methods - 30th
               International Conference, {TABLEAUX} 2021},
  series    = {Lecture Notes in Computer Science},
  volume    = {12842},
  pages     = {299--313},
  publisher = {Springer},
  year      = {2021},
  doi       = {10.1007/978-3-030-86059-2\_18}
}

@inproceedings{GorShi22,
  author       = {Rajeev Gor{\'{e}} and Ian Shillito},
  editor       = {David Fern{\'{a}}ndez{-}Duque and
                  Alessandra Palmigiano and
                  Sophie Pinchinat},
  title        = {{Direct elimination of additive-cuts in GL4ip: verified and extracted}},
  booktitle    = {Advances in Modal Logic, AiML 2022, Rennes, France, August 22-25,
                  2022},
  pages        = {429--449},
  publisher    = {College Publications},
  year         = {2022},
  url          = {http://www.aiml.net/volumes/volume14/26-Gore-Shillito.pdf},
  bibsource    = {dblp computer science bibliography, https://dblp.org}
}

@inproceedings{GroShiClo25,
  author={{\noopsort{Groot}}{De Groot}, Jim and Shillito, Ian and Clouston, Ranald},
  booktitle={2025 40th Annual ACM/IEEE Symposium on Logic in Computer Science (LICS)}, 
  title={{Semantical analysis of intuitionistic modal logics between CK and IK}}, 
  year={2025},
  volume={},
  number={},
  pages={169-182},
  doi={10.1109/LICS65433.2025.00020}
}

@article{Hud93,
    author = {Hudelmaier, J{\¨{o}}rg},
    title = "{An O(n log n)-space decision procedure for intuitionistic propositional logic}",
    journal = {Journal of Logic and Computation},
    volume = {3},
    number = {1},
    pages = {63-75},
    year = {1993},
    month = {02},
    issn = {0955-792X},
    doi = {10.1093/logcom/3.1.63},
}

@article{Iem18,
    author = {Iemhoff, Rosalie},
    title = {Terminating sequent calculi for two intuitionistic modal logics},
    journal = {Journal of Logic and Computation},
    volume = {28},
    number = {7},
    pages = {1701-1712},
    year = {2018},
    month = {10},
    issn = {0955-792X},
    doi = {10.1093/logcom/exy026}
}

@article{Iem19a,
	Author = {R. Iemhoff},
	Journal = {Archive for Mathematical Logic},
	Number = {1-2},
	Pages = {155-181},
	Title = {Uniform interpolation and sequent calculi in modal logic},
	Volume = {58},
	Year = {2019}}

@article{Iemhoff19b,
title = {Uniform interpolation and the existence of sequent calculi},
journal = {Annals of Pure and Applied Logic},
volume = {170},
number = {11},
pages = {102711},
year = {2019},
issn = {0168-0072},
doi = {10.1016/j.apal.2019.05.008},
author = {Rosalie Iemhoff},
}

@article{Kav16,
  author       = {G. A. Kavvos},
  title        = {The Many Worlds of Modal {\(\lambda\)}-calculi: {I}. {C}urry-{H}oward for
                  Necessity, Possibility and Time},
  journal      = {CoRR},
  volume       = {abs/1605.08106},
  year         = {2016},
  url          = {http://arxiv.org/abs/1605.08106},
  eprinttype    = {arXiv},
  eprint       = {1605.08106}
}

@misc{Kur26,
  author = {Taishi Kurahashi},
  title = {Interpolation properties for several logics},
  howpublished = {\url{https://www2.kobe-u.ac.jp/~tk/jp/notes/ULIP.html}},
  note = {accessed: 13/02/2026}
}

@article{Mendler11,
title = {{Cut-free Gentzen calculus for multimodal CK}},
journal = {Information and Computation},
volume = {209},
number = {12},
pages = {1465-1490},
year = {2011},
note = {Intuitionistic Modal Logic and Applications (IMLA 2008)},
issn = {0890-5401},
doi = {https://doi.org/10.1016/j.ic.2011.10.003},
author = {Michael Mendler and Stephan Scheele}
}

@inproceedings{MenPai05,
  title={Constructive {CK} for contexts},
  author={Mendler, M. and
          de Paiva, V.},
  booktitle={Proceedings {CRR} 2005},
  year={2005}
}

@article{Pit92,
  author    = {Andrew M. Pitts},
  title     = {On an Interpretation of Second Order Quantification in First Order
               Intuitionistic Propositional Logic},
  journal   = {Journal of Symbolic Logic},
  volume    = {57},
  number    = {1},
  pages     = {33--52},
  year      = {1992},
  doi       = {10.2307/2275175}
}

@inproceedings{Plotkin_Stirling86,
author = {Plotkin, Gordon and Stirling, Colin},
title = {A framework for intuitionistic modal logics: extended abstract},
year = {1986},
isbn = {0934613049},
publisher = {Morgan Kaufmann Publishers Inc.},
address = {San Francisco, CA, USA},
booktitle = {Proceedings of the 1986 Conference on Theoretical Aspects of Reasoning about Knowledge},
pages = {399–406},
numpages = {8},
location = {Monterey, California},
series = {TARK '86}
}

@book{Prior57TimeModality,
  title={Time and Modality},
  author={Prior, A. N.},
  lccn={57002267},
  series={John Locke Lectures},
  year={1957},
  publisher={Clarendon Press}
}

@misc{Rocq,
  author       = {{The Rocq Development Team}},
  title        = {{The Rocq Prover}},
  month        = apr,
  year         = 2025,
  publisher    = {Zenodo},
  version      = {9.0},
  doi          = {10.5281/zenodo.15149629}
}

@phdthesis{Shi23,
  author = {Shillito, I.},
  title = {New Foundations for the Proof Theory of Bi-Intuitionistic and Provability Logics Mechanized in {C}oq},
  school  = {Australian National University},
  address = {Canberra},
  year = {2023},
  url = {http://www.proquest.com/docview/2812065824?pq-origsite=gscholar&fromopenview=true}
}

@InProceedings{ShiGieGorIem23,
author="Shillito, Ian
and Iris {\noopsort{Giessen}}{van der Giessen}
and Gor{\'e}, Rajeev
and Iemhoff, Rosalie",
editor="Ramanayake, Revantha
and Urban, Josef",
title="A New Calculus for Intuitionistic Strong {L}{\"o}b Logic: Strong Termination and Cut-Elimination, Formalised",
booktitle="Automated Reasoning with Analytic Tableaux and Related Methods, TABLEAUX 2023",
year="2023",
publisher="Springer Nature Switzerland",
address="Cham",
pages="73--93",
isbn="978-3-031-43513-3",
doi={10.1007/978-3-031-43513-3_5}
}

@phdthesis{Simpson94PhD,
  author    = {Alex K. Simpson},
  title     = {The proof theory and semantics of intuitionistic modal logic},
  school    = {University of Edinburgh},
  year      = {1994}
}

@incollection{Vis1996,
  Author = {Visser, A.},
  Booktitle = {{G{\"o}del '96 proceedings}},
  Editor = {H\'ajek, P.},
  Pages = {139-164},
  Publisher = {{Springer-Verlag}},
  Series = {Lecture Notes in Logic},
  Title = {{Uniform interpolation and layered bisimulation}},
  Url = {http://projecteuclid.org/download/pdf_1/euclid.lnl/1235417019},
  Volume = {6},
  Year = {1996}}

@article{Vor58,
 author = {Vorob'ev, N. N.},
 title = {{A new algorithm of derivability in a constructive calculus of statements}},
 journal = {Trudy Matematicheskogo Instituta imeni V. A. Steklova},
 publisher = {Acad. Sci. USSR},
 issn = {0371-9685},
 volume = {52},
 pages = {193--225},
 year = {1958},
 language = {Russian},
 note = {(Translation in American Mathematical Society Translations, series 2, volume 94, 1970)}
}

@article{Wijesekera90,
title = "Constructive modal logics {I}",
journal = "Annals of Pure and Applied Logic",
volume = "50",
number = "3",
pages = "271--301",
year = "1990",
doi = "10.1016/0168-0072(90)90059-B",
author = "D. Wijesekera",
}

@article{WijesekeraNerode05,
title = {Tableaux for constructive concurrent dynamic logic},
journal = {Annals of Pure and Applied Logic},
volume = {135},
number = {1},
pages = {1-72},
year = {2005},
issn = {0168-0072},
doi = {https://doi.org/10.1016/j.apal.2004.12.001},
author = {Duminda Wijesekera and Anil Nerode}
}

@Incollection{WolterZakharyaschev99b,
author="Wolter, Frank
and Zakharyaschev, Michael",
editor="Cantini, Andrea
and Casari, Ettore
and Minari, Pierluigi",
title="Intuitionistic Modal Logic",
bookTitle="Logic and Foundations of Mathematics: {S}elected Contributed Papers of the Tenth International Congress of Logic, Methodology and Philosophy of Science, Florence, August 1995",
year="1999",
publisher="Springer Netherlands",
address="Dordrecht",
pages="227--238",
isbn="978-94-017-2109-7",
doi="10.1007/978-94-017-2109-7\_17"
}

@article{WolterZakharyaschev97,
author = {Wolter, Frank and Zakharyaschev, Michael},
year = {1997},
pages = {73--92},
title = {On the Relation Between Intuitionistic and Classical Modal Logics},
volume = {36},
journal = {Algebra and Logic},
doi = {10.1007/BF02672476}
}

@article{WolterZakharyaschev99a,
author = {Wolter, Frank and Zakharyaschev, Michael},
year = {1999},
pages = {168--186},
title = {Intuitionistic Modal Logics as Fragments of Classical Bimodal Logics},
journal = {Logic at Work, Studies in Fuzziness Soft Computing},
volume = {24}
}

\end{document}